\newcommand*\samethanks[1][\value{footnote}]{\footnotemark[#1]}
\renewcommand{\epsilon}{\varepsilon}
\protected\def\mathbb#1{\text{\usefont{U}{msb}{m}{n}#1}} 
\g@addto@macro\bfseries{\boldmath}
\g@addto@macro\mdseries{\unboldmath}
\g@addto@macro\normalfont{\unboldmath}
\g@addto@macro\rmfamily{\unboldmath}
\g@addto@macro\upshape{\unboldmath}
\renewcommand*{\multicitedelim}{\addcomma\space}
\newcommand{\myhref}[1]{%
  \iffieldundef{doi}
    {\iffieldundef{url}
       {#1}
       {\href{\strfield{url}}{#1}}}
    {\href{http://dx.doi.org/\strfield{doi}}{#1}}%
}
    \newlength{\temp@x}%
    \newlength{\temp@y}%
    \newlength{\temp@w}%
    \newlength{\temp@h}%
    \def\my@coords#1#2#3#4{%
      \setlength{\temp@x}{#1}%
      \setlength{\temp@y}{#2}%
      \setlength{\temp@w}{#3}%
      \setlength{\temp@h}{#4}%
      \adjustlengths{}%
      \my@pdfliteral{\strip@pt\temp@x\space\strip@pt\temp@y\space\strip@pt\temp@w\space\strip@pt\temp@h\space re}}%
      \def\my@pdfliteral#1{\pdfliteral page{#1}}
      \def\adjustlengths{}%
      \def\my@pdfliteral #1{}
      \def\adjustlengths{\setlength{\temp@h}{-\temp@h}\addtolength{\temp@y}{1in}\addtolength{\temp@x}{-1in}}%
    \def\Hy@colorlink#1{%
      \begingroup
        \ifHy@ocgcolorlinks
          \def\Hy@ocgcolor{#1}%
          \my@pdfliteral{q}%
          \my@pdfliteral{7 Tr}
        \else
          \HyColor@UseColor#1%
        \fi
    }%
    \def\Hy@endcolorlink{%
      \ifHy@ocgcolorlinks%
        \my@pdfliteral{/OC/OCPrint BDC}%
        \my@coords{0pt}{0pt}{\pdfpagewidth}{\pdfpageheight}%
        \my@pdfliteral{F}
        %
        \my@pdfliteral{EMC/OC/OCView BDC}%
        \begingroup%
          \expandafter\HyColor@UseColor\Hy@ocgcolor%
          \my@coords{0pt}{0pt}{\pdfpagewidth}{\pdfpageheight}%
          \my@pdfliteral{F}
        \endgroup%
        \my@pdfliteral{EMC}%
        \my@pdfliteral{0 Tr}
        \my@pdfliteral{Q}%
      \fi
      \endgroup
    }%
\colorlet{DarkRed}{red!50!black}
\colorlet{DarkGreen}{green!50!black}
\colorlet{DarkBlue}{blue!50!black}
\declaretheorem[numberwithin=section]{theorem}
\declaretheorem[numberlike=theorem]{lemma}
\declaretheorem[numberlike=theorem]{definition}
\declaretheorem[numberlike=theorem]{claim}
\declaretheorem[numberlike=theorem, style=remark]{remark}
\newcommand{\dist}{d}
\newcommand{\s}{\ensuremath{s}\xspace}
\newcommand{\q}{q}
\title{Sublinear-Time Maintenance of Breadth-First Spanning Trees in Partially Dynamic Networks\thanks{This article appears in \emph{ACM Transactions on Algorithms} 13.4 (2017), pp 51:1--51:24, \href{http://dx.doi.org/10.1145/3146550}{doi:10.1145/3146550}. A preliminary version of this paper was presented at the \emph{40th International Colloquium on Automata, Languages and Programming (ICALP 2013)}.}}
\date{}
\author{
Monika Henzinger\thanks{University of Vienna, Faculty of Computer Science, Austria. Supported by the the University of Vienna (IK \mbox{I049-N}). The research leading to this work has received funding from the European Union's Seventh Framework Programme (FP7/2007-2013) under grant agreement no.\ 317532 and from the European Research Council under the European Union's Seventh Framework Programme (FP7/2007-2013)/ERC Grant Agreement no.\ 340506.}
\and Sebastian Krinninger\samethanks[2]
\and Danupon Nanongkai\thanks{KTH Royal Institute of Technology. Work partially done while at University of Vienna, Austria, ICERM, Brown University, USA, and Nanyang Technological University, Singapore 637371, and while supported in part by the following research grants: Nanyang Technological University grant M58110000, Singapore Ministry of Education (MOE) Academic Research Fund (AcRF) Tier 2 grant MOE2010-T2-2-082, and Singapore MOE AcRF Tier 1 grant MOE2012-T1-001-094.}
}
\begin{document}
\maketitle
\begin{abstract}
We study the problem of maintaining a {\em breadth-first spanning tree} (BFS tree) in {\em partially dynamic} distributed networks modeling a sequence of either failures or additions of communication links (but not both). We present deterministic $(1+\epsilon)$-approximation algorithms whose amortized time (over some number of link changes) is {\em sublinear} in $D$, the {\em maximum diameter} of the network.

Our technique also leads to a deterministic $(1+\epsilon)$-approximate incremental algorithm for single-source shortest paths (SSSP) in the sequential (usual RAM) model. Prior to our work, the state of the art was the classic \emph{exact} algorithm of Even and Shiloach~\cite{EvenS81} that is optimal under some assumptions \cite{RodittyZ11,HenzingerKNS15}. Our result is the first to show that, in the incremental setting, this bound can be beaten in certain cases if some approximation is allowed.

\end{abstract}
\newpage

\newpage

\section{Introduction}

Complex networks are among the most ubiquitous models of interconnections between a multiplicity of individual entities, such as computers in a data center, human beings in society, and neurons in the human brain. The connections between these entities are constantly changing; new computers are gradually added to data centers, or humans regularly make new friends. These changes are usually {\em local} as they are known only to the entities involved. Despite their locality, they could affect the network {\em globally}; a single link failure could result in several routing path losses or destroy the network connectivity. To maintain its robustness, the network has to quickly respond to changes and repair its infrastructure. The study of such tasks has been the subject of several active areas of research, including dynamic, self-healing, and self-stabilizing networks.  

One important infrastructure in distributed networks is the {\em breadth-first spanning (BFS) tree} \cite{Lynch96,Peleg00}. It can be used, for instance, to approximate the network diameter and to provide a communication backbone for broadcast, routing, and control. In this paper, we study the problem of maintaining a BFS tree from a root node on dynamic distributed networks. Our main interest is repairing a BFS tree as fast as possible after each topology change.

\paragraph{Model.} We model the communication network by the CONGEST model \cite{Peleg00}, one of the major models of (locality-sensitive) distributed computation.
Consider a synchronous  network of processors  modeled by an undirected unweighted graph $ G = (V, E) $, where nodes model the processors and edges model the bounded-bandwidth links between the processors. We let $V$ and $E$ denote the set of nodes and edges of~$G$, respectively, and let $ \s $ be a specified \emph{root node}.
For any node $u$ and $v$, we denote by $\dist_G(u, v)$ the distance between $u$ and $v$ in $G$. 
The processors  (henceforth, nodes) are assumed to have unique IDs of $O(\log n)$ bits and infinite computational power. Each node has limited topological knowledge; in particular, it only knows the IDs of its neighbors and knows {\em no} other topological information (such as whether its neighbors are linked by an edge or not).
The communication is synchronous and occurs in discrete pulses, called {\em rounds}. All the nodes wake up simultaneously at the beginning of each round. In each round each node $u$ is allowed to send an arbitrary message of  $O(\log n)$ bits through each edge $ (u, v) $ that is adjacent to $u$, and the message will reach $v$ at the end of the current round. There are several measures to analyze the performance of such algorithms, a fundamental one being the {\em running time}, defined as the worst-case number of rounds of distributed communication. 

We model dynamic networks by a sequence of {\em attack} and {\em recovery} stages following the initial {\em preprocessing}. 
The dynamic network starts with a preprocessing on the initial network denoted by $G_0$, where 
nodes communicate on $G_0$ for some number of rounds. Once the preprocessing is finished, we begin the first attack stage where we assume that an adversary, who sees the current network $G_0$ and the states of all nodes, inserts and deletes an arbitrary number of edges in $G_0$. We denote the resulting network by $G_1$. This is followed by the first recovery stage where 
we allow nodes to communicate on $G_1$. After the nodes have finished communicating, the second attack stage starts, followed by the second recovery stage, and so on.
For any algorithm, we let the {\em total update time} be the total number of rounds needed by nodes to communicate during all recovery stages. Let the {\em amortized update time} be the total time divided by $q$ which is defined as the number of edges inserted and deleted. Important parameters in analyzing the running time are $n$, the number of nodes (which remains the same throughout all changes) and $D$, the {\em maximum diameter}, defined to be the maximum diameter among all networks in $\{G_0, G_1, \ldots\}$.
If some network $ G_t $ is not connected, we define its diameter as the diameter of the connected component containing the root node.
Note that $D\leq n$ according to this definition.
Following the convention from the area of (sequential) dynamic graph algorithms, we say that a dynamic network is {\em fully dynamic} if both insertions and deletions can occur in the attack stages.
Otherwise, it is {\em partially dynamic}. Specifically, if only edge insertions can occur,
it is an {\em incremental dynamic network}. If only edge deletions can occur, it is {\em decremental}.

Our model highlights two aspects of dynamic networks: (1) How quickly a network can recover its infrastructure after changes and (2) how edge failures and additions affect the network. These aspects have been studied earlier but we are not aware of any previous model identical to ours. To highlight these aspects, a few assumptions are inherent in our model. 
First, it is assumed that the network remains static in each recovery stage. This assumption is often used (e.g., \cite{Korman08,HayesST12,KrizancLR04,MalpaniWV00}) and helps to emphasize the running time aspect of dynamic networks.
Also note that we assume that the network is synchronous, but our algorithms will also work in an asynchronous model under the same asymptotic time bounds, using a synchronizer \cite{Peleg00,Awerbuch85}. 
Furthermore, we consider amortized update time which is similar in spirit to the amortized communication complexity heavily studied earlier (e.g., \cite{AwerbuchCK08}).
Finally, the results in this paper are on partially dynamic networks. While fully dynamic algorithms are more desirable, we believe that the partially dynamic setting is worth studying, for two reasons. The first reason, which is our main motivation, comes from an experience in the study of sequential dynamic algorithms,
where insights from the partially dynamic setting often lead to improved fully dynamic algorithms. 
Moreover, partially dynamic algorithms can be useful in cases where one type of changes occurs much more frequently than the other type. 
For example, links constantly fail in physical networks, and it might not be necessary that the network has to be fixed (by adding a link) immediately. Instead, the network can try to maintain its infrastructures under a sequence of failures until the quality of service cannot be guaranteed anymore, e.g., the network diameter becomes too large. Partially dynamic algorithms for maintaining a BFS tree, which in turn maintains the approximate network diameter, are quite suitable for this type of applications.

\paragraph{Problem.} We are interested in maintaining an approximate BFS tree.
Our definition of approximate BFS trees below is a modification of the definition of BFS trees in \cite[Definition 3.2.2]{Peleg00}. 
\begin{definition}[Approximate BFS tree] 
For any $\alpha\geq 1$, an {\em $\alpha$-approximate BFS tree} of an unweighted undirected graph $G$ with respect to a given root $s$ is a spanning tree~$T$ of the connected component containing $ s $ such that for every node $v$ connected to $ \s $, $\dist_T(v, \s)\leq \alpha \dist_{G}(v, \s)$.
If $ \alpha = 1 $, then $ T $ is an \emph{(exact) BFS tree}.
\end{definition}
Note that, for any spanning tree $ T $ of $ G $, $\dist_T(v, \s)\geq \dist_{G}(v, \s)$.
Our goal is to maintain an approximate BFS tree $T_t$ at the end of each recovery stage~$t$ in the sense that every node $v$ knows its approximate distance to the preconfigured root $s$ in $G_t$ and, for each neighbor $ u $ of $ v $, $v$ knows if $u$ is its parent or child in $T_t$.
Note that for convenience we will usually consider $ \dist_{G} (v, \s) $, the distance of $ v $ \emph{to} the root, instead of $ \dist_{G} (\s, v) $, the distance of $ v $ \emph{from} the root.
In an undirected graph both values are the same.

\paragraph{Naive Algorithm.} As a toy example, observe that we can maintain a BFS tree simply by recomputing a BFS tree from scratch in each recovery stage. By using the standard algorithm (see, e.g., \cite{Peleg00,Lynch96}), we can do this in time $O(D_t)$, where $D_t$ is the diameter of the graph~$G_t$. Thus, the update time is $O(D)$.

\paragraph{Results.}
Our main results are partially dynamic algorithms that break the naive update time of $ O (D) $ in the long term.
They can maintain, for any constant $ 0 < \epsilon \leq 1 $, a $(1+\epsilon)$-approximate BFS tree in time that is {\em sublinear in $D$} when amortized over $\omega( n / D)$ edge changes. 
To be precise, the amortized update time over $q$ edge changes is 
\begin{equation*}
O \left( \frac{n^{1/3} D^{2/3}}{ \epsilon^{2/3}q^{1/3}} \right) ~~~\mbox{and}~~~ O \left( \frac{n^{1/5} D^{4/5}}{\epsilon q^{1/5}} \right)
\end{equation*}
in the incremental and decremental setting, respectively.
For the particular case of $q=\Omega(n)$, we get amortized update times of $O(D^{2/3} / \epsilon^{2/3})$ and $O(D^{4/5} / \epsilon)$ for the incremental and decremental cases, respectively.
Our algorithms do not require any prior knowledge about the dynamic network, e.g., $D$ and $q$.
We have formulated the algorithms for a setting that allows insertions or deletions of edges.
The guarantees of our algorithms also hold when we allow insertions or deletions of \emph{nodes}, where the insertion of a node also inserts all its incident edges and the deletion of a node also deletes all its incident edges.
In the running time, the parameter $ q $ then counts the number of node insertions or node deletions, respectively.

We note that, while there is no previous literature on this problem, one can parallelize the algorithm of Even and Shiloach~\cite{EvenS81} (see also \cite{King99,RodittyZ11}) to obtain an amortized update time of $O(nD/q + 1)$ over $q$ changes in both the incremental and the decremental setting. This bound is sublinear in $D$ when $q=\omega(n)$. 
Our algorithms give a sublinear time guarantee for a smaller number of changes, especially in applications where $D$ is large.
They are faster than the Even-Shiloach algorithm when $ q = \omega (\epsilon n \sqrt{D}) $ (incremental) and $ q = \omega (\epsilon^{7/12} n D^{1/6}) $ (decremental).

In the sequential (usual RAM) model, our technique also gives an $(1+\epsilon)$-approximation algorithm for the incremental single-source shortest paths (SSSP) problem with an amortized update time of $O(mn^{1/4} \log{n} / \sqrt{\epsilon q})$ 
per insertion and $O(1)$ query time, where $m$ is the number of edges in the final graph, and $q$ is the number of edge insertions. Prior to this result, only the classic exact algorithm of Even and Shiloach \cite{EvenS81} from the 80s, with $O(mn/q)$ amortized update time, was known.
No further progress has been made in the last three decades.
Roditty and Zwick~\cite{RodittyZ11} provided an explanation for this by showing that the algorithm of Even and Shiloach~\cite{EvenS81} is likely to be the fastest combinatorial \emph{exact} algorithm, assuming that there is no faster combinatorial algorithm for Boolean matrix multiplication. 
More recently Henzinger et al.~\cite{HenzingerKNS15} showed that by assuming a different conjecture, called Online Matrix-Vector Multiplication Conjecture, this statement can be extended to any algorithm (including non-combinatorial ones).
Bernstein and Roditty~\cite{BernsteinR11} showed that, in the decremental setting, this bound can be broken if some approximation is allowed. Our result is the first one of the same spirit in the \emph{incremental} setting for deterministic algorithms; i.e., we break the bound of Even and Shiloach for the case $q= o(n^{3/2})$, which in particular applies when $m= o(n^{3/2})$.
The techniques introduced in this paper (first presented in the preliminary version \cite{HenzingerKN-ICALP13}) together with techniques from~\cite{HenzingerKN-SICOMP15} also led to a decremental algorithm \cite{HenzingerKN-SODA14} that improves the result of Bernstein and Roditty~\cite{BernsteinR11}.
We finally obtained a near-optimal algorithm in the decremental setting \cite{HenzingerKN-FOCS14}, which is a significant improvement over \cite{BernsteinR11}.
In terms of deterministic algorithms, Bernstein and Chechik have recently presented improved incremental and decremental algorithms for dense~\cite{BernsteinC16} and sparse graphs~\cite{BernsteinC17}.
For very sparse graphs with $m = \Theta (n)$, the incremental algorithm in this paper still remains the fastest.

\paragraph{Related Work.} The problem of computing on dynamic networks is a classic problem in the area of distributed computing, studied from as early as the 70s; see, e.g., \cite{AwerbuchCK08} and references therein. The main motivation is that dynamic networks better capture real networks, which experience failures and additions of new links.
There is a large number of models of dynamic networks in the literature, each emphasizing different aspects of the problem. Our model closely follows the model of the sequential setting and, as discussed earlier, highlights the amortized update time aspect. 
It is closely related to the model in \cite{KormanP08} where the main goal is to optimize the amortized update time using static algorithms in the recovery stages. The model in \cite{KormanP08} is still slightly different from ours in terms of allowed changes. For example, the model in \cite{KormanP08} considers weighted networks and allows small weight changes but no topological changes; moreover, the message size can be unbounded (i.e., the static algorithm in the recovery stage operates under the so-called LOCAL model).
Another related model the {\em controlled dynamic model} (e.g., \cite{KormanK13,AfekAPS96}) where the topological changes do not happen instantaneously but are delayed until getting a permit to do so from the resource controller. Our algorithms can be used in this model as well since we can delay the changes until each recovery stage is finished.
Our model is similar to, and can be thought of as a combination of, two types of models: those in, e.g., \cite{Korman08,HayesST12,KrizancLR04,MalpaniWV00} whose main interest is to determine how fast a network can recover from changes using static algorithms in the recovery stages, and those in, e.g., \cite{AwerbuchCK08,AfekAG87,Elkin07}, which focus on the amortized cost per edge change. 
Variations of partially dynamic distributed networks have also been considered (e.g., \cite{Italiano91,RamaraoV92,CiceroneDSFP07,CiceroneDSF10}).

The problem of constructing a BFS tree has been studied intensively in various distributed settings for decades (see \cite[Chapter 5]{Peleg00}, \cite[Chapter 4]{Lynch96} and references therein). The studies were also extended to more sophisticated structures such as minimum spanning trees (e.g., \cite{GarayKP93,KuttenP98,PelegR00,Elkin06,LotkerPP06,LotkerPPP05,KorKP13,DasSarmaHKKNPPW12,ElkinKNP14}) and Steiner trees \cite{KhanKMPT12}. 
These studies usually focus on {\em static} networks, i.e., they assume that the network never changes and want to construct a BFS tree once, from scratch. 
While we are not aware of any results on maintaining a BFS tree on dynamic networks, there are a few related results. 
Much attention (e.g., \cite{AwerbuchCK08}) has previously been given to the problem of {\em maintaining a spanning tree}. In a seminal paper by Awerbuch~\cite{AwerbuchCK08}, it was shown that the amortized message complexity of maintaining a spanning tree can be significantly smaller than the cost of the previous approach of recomputing from scratch \cite{AfekAG87}.\footnote{A variant of their algorithm was later implemented as a part of the PARIS networking project at IBM \cite{CidonGKK95} and slightly improved \cite{KuttenP99}.}
Our result is in the same spirit as \cite{AwerbuchCK08} in breaking the cost of recomputing from scratch. 
An attempt to maintain spanning trees of small diameter has also motivated a problem called {\em best swap}. The goal is to replace a failed edge in the spanning tree by a new edge in such a way that the diameter is minimized. This problem has recently gained considerable attention in both sequential (e.g., \cite{AlstrupHLT05,ItalianoR98,NardelliPW01,NardelliPW03,SalvoP07,ItoIOY05,DasGW10,Gfeller12}) and distributed (e.g., \cite{GfellerSW11,FlocchiniEPPS06}) settings.

In the sequential dynamic graph algorithms literature, a problem similar to ours is the single-source shortest paths (SSSP) problem on undirected graphs. 
This problem has been studied in partially dynamic settings and has applications to other problems, such as all-pairs shortest paths and reachability.
As we have mentioned earlier, the classic bound of \cite{EvenS81}, which might be optimal \cite{RodittyZ11,HenzingerKNS15}, has recently been improved by randomized decremental approximation algorithms \cite{BernsteinR11,HenzingerKN-SODA14,HenzingerKN-FOCS14}, and we achieve a similar result in the incremental setting with a deterministic algorithm. 
Since our algorithms use the algorithm of \cite{EvenS81} as a subroutine, we formally state its guarantees in the following.
As mentioned above, this algorithm has not been considered in the distributed model before, but its analysis from the sequential model immediately carries over to the distributed model.\footnote{In the sequential model, the algorithm has to perform work proportional to the degree of each node whose distance to the root decreases (increases). Assume we are interested in a shortest paths tree up to depth $ X $. As each node's distance to the root can increase (decrease) at most $ X $ times, the total running time is $ O (m X) $. In the distributed model, sending a message to all neighbors takes one round and thus we only charge constant time to each level increase (decrease) of a node, resulting in a total time of $ O (n \min (X, D) + q) $. The additional $ q $ comes from the fact that we have to spend constant time per insertion (deletion), which in the sequential model is dominated by other running time aspects.} Since we will need this result later in this paper, we state it here. 

\begin{theorem}[\cite{EvenS81}]\label{thm:Even_Shiloach}
There is a partially dynamic algorithm for maintaining a shortest paths tree from a given root node up to depth $ X \leq n $ under edge insertions (deletions) in an unweighted, undirected graph.
Its total running time over $ q $ insertions (deletions) is $ O (mX) $ in the sequential model and $ O (n \min (X, D) + q) $ in the distributed model.
\end{theorem}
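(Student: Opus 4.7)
The plan is to present the classical Even-Shiloach data structure and verify its cost in both computational models. I describe the decremental version; the incremental case is symmetric (replace ``increase'' by ``decrease'') and relies on the same fact that each node's level takes only finitely many meaningful values in $\{0, 1, \ldots, X, \infty\}$. Each node $v$ maintains a level $\ell(v) \in \{0, 1, \ldots, X, \infty\}$, initialized by one BFS from $\s$ truncated at depth $X$ so that $\ell(v) = \dist_G(v, \s)$ whenever this distance is at most $X$, and $\ell(v) = \infty$ otherwise. Additionally, $v$ stores for each $i$ a bucket of its neighbors at level $i$ and designates as tree parent any neighbor in the bucket for level $\ell(v) - 1$. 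When an edge $(u, v)$ with $\ell(u) \leq \ell(v)$ is deleted, $v$ updates its buckets; if $\ell(u) = \ell(v) - 1$ and the bucket for that level becomes empty, $v$ rescans its neighbors, sets $\ell(v) \leftarrow \min_{w \in N(v)} \ell(w) + 1$ (capped at $\infty$), and, if this strictly increased $\ell(v)$, notifies all neighbors so that the process can cascade. Correctness is an induction on time: distances are monotone nondecreasing under deletions, and the rescan restores the invariant $\ell(v) = \dist_G(v, \s)$ (or $> X$).

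The running time hinges on the amortized observation that each node's level can strictly increase at most $X + 1$ times over the entire sequence, since it takes at most $X + 2$ distinct values. In the sequential model, a single increase at $v$ costs $O(\deg(v))$ work to rescan and to notify the former children of $v$, giving total work $O\bigl(X \sum_v \deg(v)\bigr) = O(mX)$; the initial BFS and $O(1)$ bookkeeping per deletion are absorbed. In the CONGEST model, a single rescan-and-notify at $v$ executes in $O(1)$ rounds because $v$ only needs one $O(\log n)$-bit message per incident edge, and all nodes execute such steps in parallel. Since the finite values of a node's level are bounded by the current diameter, no level exceeds $\min(X, D)$ before jumping to $\infty$, so the total number of level-change events is at most $n(\min(X, D) + 1)$. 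Adding the $O(q)$ rounds needed to initiate processing of deletions yields the claimed $O(n \min(X, D) + q)$ bound.

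The main subtlety I anticipate is the distributed analysis: cascading updates of different nodes can fire in the same round, so one must argue that the total round count over the whole sequence is bounded by the global number of level-change events, rather than by the worst-case cascade depth at any single moment. This is handled by a standard pipelining argument in which, in each round, every node whose level is currently inconsistent fires its own update in parallel and charges the round to its own level change, so each round corresponds to at most one level increase per node and the round complexity matches the event count up to constants.
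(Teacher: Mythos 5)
Your proposal is correct and uses essentially the same counting argument the paper sketches in the footnote accompanying this theorem: bound each node's level increases by $\min(X,D)$, charge $O(\deg(v))$ sequential work or $O(1)$ distributed rounds to each increase, and add $O(q)$ for per-update overhead.
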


\section{Main Technical Idea} \label{sec:idea}

All our algorithms are based on a simple idea of modifying Even-Shiloach algorithm~\cite{EvenS81} with lazy updates, which we call \emph{lazy Even-Shiloach tree}. Implementing this idea on different models requires modifications to cope with difficulties and to maximize efficiency. 
In this section, we explain the main idea by sketching a simple algorithm and its analysis for the incremental setting in the sequential and the distributed model. 
We start with an algorithm that has {\em additive error}: Let $\kappa$ and $\delta$ be parameters. For every recovery stage $t$, we maintain a tree $T_t$ such that $\dist_{G_t}(v, \s)\leq \dist_{T_t}(v, \s)\leq \dist_{G_t}(v, \s)+\kappa \delta$ for every node $ v $.
We will do this by recomputing a BFS tree from scratch repeatedly, specifically $O(q/\kappa+nD/\delta^2)$ times during $ q $ updates.

During the preprocessing, our algorithm constructs a BFS tree of $G_0$, denoted by $T_0$. This means that every node $u$ knows its parent and children in $T_0$ and the value of $\dist_{T_0}(u, \s)$. Suppose that, in the first attack stage, an edge is inserted, say $(u, v)$ where $\dist_{G_0}(u, \s) > \dist_{G_0}(v, \s)$. As a result, the distance from $u$ to $ \s $ might decrease, i.e. $\dist_{G_1}(u, \s)<\dist_{G_0}(u, \s)$.
In this case, the distances from $s$ to some other nodes (e.g., the children of $u$ in $T_0$) could decrease as well, and we may wish to recompute the BFS tree. Our approach is to do this {\em lazily}: We recompute the BFS tree only when the distance from $u$ to $s$ decreases by at least $\delta$; otherwise, we simply do nothing! In the latter case, we say that {\em $u$ is lazy}.
Additionally, we regularly ``clean up'' by recomputing the BFS tree after each $\kappa$ insertions.

To prove an additive error of $\kappa\delta$, observe that errors occur for this single insertion only when $v$ is lazy. 
Intuitively, this causes an additive error of $\delta$ since we could have decreased the distance of $v$ and other nodes by at most $\delta$, but we did not. This argument can be extended to show that if we have $i$ lazy nodes, then the additive error will be at most $i\delta$. Since we do the cleanup each $\kappa$ insertions, the additive error will be at most $\kappa\delta$ as claimed. 

To bound the number of BFS tree recomputations, first observe that the cleanup clearly contributes $O(q/\kappa)$ recomputations in total, over $q$ insertions. 
Moreover, a recomputation could also be caused by some node $v$, whose distance to $s$ decreases by at least $\delta$.
Since every time a node $v$ causes a recomputation, its distance decreases by at least $\delta$, and since $\dist_{G_0}(v, \s)\leq D$, $v$ will cause the recomputation at most $D/\delta$ times. This naive argument shows that there are $nD/\delta$ recomputations (caused by $n$ different nodes) in total. 
This analysis is, however, {\em not} enough for our purpose. A tighter analysis, which is crucial to all our algorithms relies on the observation that when $v$ causes a recomputation, the distance from any neighbor of $v$, say $ v' $, to $s$ also decreases by at least $\delta-1$. Similarly, the distance of any neighbor of $v'$ to $s$ decreases by at least $\delta-2$, and so on. This leads to the conclusion that one recomputation corresponds to $(\delta+(\delta-1)+(\delta-2)+\ldots)=\Omega(\delta^2)$ distance decreases. Thus, the number of recomputations is at most $nD/\delta^2$.
Combining the two bounds, we get that the number of BFS tree computations is $O(q/\kappa+nD/\delta^2)$ as claimed above.
We get a bound on the total time when we multiply this number by the time needed for a single BFS tree computation.
In the sequential model this takes time $ O (m) $, where $ m $ is the final number of edges, and in the distributed model this takes time $ O (D) $, where $ D $ is the dynamic diameter of the network.

To convert the additive error into a multiplicative error of $(1+\epsilon)$, we execute the above algorithm only for nodes whose distances to $s$ are greater than $\kappa\delta/\epsilon$. For other nodes, we can use the algorithm of Even and Shiloach~\cite{EvenS81} to maintain a BFS tree of depth $\kappa\delta/\epsilon$. This requires an additional time of $O(m\kappa\delta/\epsilon)$ in the sequential model and $O(n\kappa\delta/\epsilon)$ in the distributed model. 

By setting $\kappa$ and $\delta$ appropriately, the above incremental algorithm immediately gives total update times of $ O (m n^{2/5} q^{2/5} / \epsilon^{2/5}) $ and $ O (q^{2/5} n^{3/5} D^{4/5} / \epsilon^{2/5}) $ in the sequential and distributed model, respectively.
To obtain the running time bounds claimed in the introduction of this paper, we need one more idea called {\em layering}, where we use different values of $\delta$ and $\kappa$ depending on the distance of each node to $ s $.
In the decremental setting, the situation is much more difficult, mainly because it is expensive for a node $v$ to determine how much its
distance to $s$ has increased after a deletion. Moreover, unlike the incremental case, nodes cannot simply ``do nothing'' when an edge is deleted. We have to cope with this using several other ideas, e.g., constructing an virtual tree (in which edges sometimes represent paths).

\section{Incremental Algorithm}\label{sec:incremental_algorithm}

In this section we present a framework for an incremental algorithm that allows up to $ q $ edge insertions and provides an additive approximation of the distances to a distinguished node $ \s $.
Subsequently we will explain how to use this algorithm to get $ (1+\epsilon) $-approximations in the sequential model and the distributed model, respectively.
For simplicity we assume that the initial graph is connected.
In Section~\ref{sec:removing_connectedness_assumption_incremental} we explain how to remove this assumption.

\subsection{General Framework}\label{sec:incremental_abstract}

The algorithm (see Algorithm~\ref{alg:insertion_algorithm}) works in {\em phases}.
At the beginning of every phase we compute a BFS tree $ T_0 $ of the current graph, say $ G_0 $.
Every time an edge $ (u, v) $ is inserted, the distances of some nodes to $ \s $ in $ G $ might decrease.
Our algorithm tries to be {\em as lazy as possible}. That is, when the decrease does not exceed some parameter~$ \delta $, our algorithm keeps its tree $ T_0 $ and accepts an {\em additive error} of $ \delta $ for every node.
When the decrease exceeds $ \delta $, our algorithm starts a new phase and recomputes the BFS tree.
It also starts a new phase after each $ \kappa $ edge insertions to keep the additive error limited to $ \kappa \delta $.
The algorithm will answer a query for the distance from a node $ x $ to $ \s $ by returning $ \dist_{G_0} (x, \s) $, the distance from $ x $ to $ \s $ at the beginning of the current phase.
It can also return the path from $ x $ to $ \s $ in $ T_0 $ of length $ \dist_{G_0} (x, \s) $.
Besides~$ \delta $ and~$ \kappa $, the algorithm has a third parameter $ X $ which indicates up to which distance from~$ \s $ the BFS tree will be computed.
In the following we denote by $ G_0 $ the state of the graph at the beginning of the current phase and by $ G $ we denote the current state of the graph after all insertions so far.

\begin{algorithm}
\caption{Incremental algorithm}
\label{alg:insertion_algorithm}

\SetKwProg{procedure}{Procedure}{}{}
\SetKwFunction{initialize}{Initialize}
\SetKwFunction{insert}{Insert}

\procedure{\insert{$u$, $v$}}{
	$ k \gets k+1 $\;
	\lIf{$ k = \kappa $}{
		\initialize{}
	}
	\lIf{$ \dist_{G_0} (u, \s) > \dist_{G_0} (v, \s) + \delta $}{\label{lin:main_rule_incremental_algorithm}
		\initialize{}
	}
}

\BlankLine

\procedure(\tcp*[f]{Start new phase}){\initialize{}}{
	$ k \gets 0 $\;
	Compute BFS tree $ T $ of depth $ X $ rooted at $ \s $ and current distances $ \dist_{G_0} (\cdot, \s) $\;
}
\end{algorithm}

As we show below the algorithm gives the desired additive approximation by considering the shortest path of a node $ x $ to the root $ \s $ in the current graph $ G $.
By the main rule in Line~\ref{lin:main_rule_incremental_algorithm} of the algorithm, the inequality $ \dist_{G_0} (u, \s) \leq \dist_{G_0} (v, \s) + \delta $ holds for every edge $ (u, v) $ that was inserted since the beginning of the current phase (otherwise a new phase would have been started).
Since at most $ \kappa $ edges have been inserted, the additive error is at most $ \kappa \delta $.
\begin{lemma}[Additive Approximation]\label{lem:incremental_algorithm_main_invariant}
For every $ \kappa \geq 1 $ and $ \delta \geq 1 $, Algorithm~\ref{alg:insertion_algorithm} provides the following approximation guarantee for every node~$ x $ such that ${ \dist_{G_0} (x, \s) \leq X }$:
\begin{equation*}
\dist_G (x, \s) \leq \dist_{G_0} (x, \s) \leq \dist_G (x, \s) + \kappa \delta \, .
\end{equation*}
\end{lemma}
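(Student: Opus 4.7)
The plan is to split the two inequalities and handle them separately. The left inequality $\dist_G(x, \s) \leq \dist_{G_0}(x, \s)$ is immediate because the setting is purely incremental: no edge is removed between the beginning of the phase and the current state, so $ E(G_0) \subseteq E(G) $, and any $G_0$-path from $x$ to $\s$ is also a $G$-path of the same length.

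For the right inequality, I would fix a shortest path $P = (\s = v_0, v_1, \ldots, v_k = x)$ in $G$, where $k = \dist_G(x, \s)$, and bound $\dist_{G_0}(x, \s)$ by telescoping distances to $\s$ along this path. Each edge $(v_i, v_{i+1})$ of $P$ is either an edge already present in $G_0$ or one that was inserted during the current phase. In the first case, the triangle inequality in $G_0$ gives $|\dist_{G_0}(v_{i+1}, \s) - \dist_{G_0}(v_i, \s)| \leq 1$. In the second case, the main rule of Line~\ref{lin:main_rule_incremental_algorithm}, applied to both orderings of the inserted edge's endpoints (the graph is undirected), ensures $|\dist_{G_0}(v_{i+1}, \s) - \dist_{G_0}(v_i, \s)| \leq \delta$, since otherwise \texttt{Initialize} would have been invoked and the phase terminated.

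The counter $ k $ in the pseudocode forces a new phase after every $\kappa$-th insertion, so the current phase contains at most $\kappa$ inserted edges in total, and hence at most $\kappa$ such edges appear on $P$. Summing the per-edge bounds along $P$, starting from $\dist_{G_0}(v_0, \s) = 0$, yields
\begin{equation*}
\dist_{G_0}(x, \s) \;\leq\; (k - \ell) \cdot 1 + \ell \cdot \delta \;\leq\; k + \kappa \delta \;=\; \dist_G(x, \s) + \kappa \delta,
\end{equation*}
where $\ell \leq \kappa$ denotes the number of inserted edges on $P$. This is the claimed additive bound.

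There is no substantial obstacle: the only subtle point is reading Line~\ref{lin:main_rule_incremental_algorithm} symmetrically in $u$ and $v$, which is forced by the undirectedness of the graph. The hypothesis $\dist_{G_0}(x, \s) \leq X$ is not used in the inequality itself; it only guarantees that the BFS tree $T$ computed up to depth $X$ actually stores the value $\dist_{G_0}(x, \s)$ that the algorithm reports, so that the guarantee is meaningful to a querying caller.
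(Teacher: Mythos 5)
Your proof is correct and follows essentially the same route as the paper's: both fix a shortest path to $\s$ in $G$ and bound the change in $\dist_{G_0}(\cdot,\s)$ edge by edge, by $1$ for surviving $G_0$-edges (triangle inequality) and by $\delta$ for inserted edges (Line~\ref{lin:main_rule_incremental_algorithm}), with at most $\kappa$ of the latter. The paper packages this as an induction on the path with the invariant $\dist_{G_0}(x_j,\s) \leq \dist_G(x_j,\s) + S_j\delta$, whereas you unroll it as a telescoping sum, but the content is identical; your explicit remark about reading the test in Line~\ref{lin:main_rule_incremental_algorithm} symmetrically in $u,v$ is a valid observation that the paper also implicitly relies on.
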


\begin{proof}
The algorithm can only provide the approximation guarantee for every node~$ x $ such that $ \dist_{G_0} (x, \s) \leq X $ because other nodes are not contained in the BFS tree of the current phase.
It is clear that $ \dist_G (x, \s) \leq \dist_{G_0} (x, \s) $ because $ G $ is the result of inserting edges into $ G_0 $.
In the following we argue about the second inequality.

Consider the shortest path $ \pi = x_l, x_{l-1}, \ldots x_0 $ of length $ l $ from $ x $ to $ \s $ in $ G $ (where $ x_l = x $ and $ x_0 = \s $).
Let $ S_j $ (with $ 0 \leq j \leq l $) denote the number of edges in the subpath $ x_j, x_{j-1}, \ldots, x_0 $ that were inserted since the beginning of the current phase.
\begin{claim}
For every integer $ j $ with $ 0 \leq j \leq l $ we have $ \dist_{G_0} (x_j, \s) \leq \dist_G (x_j, \s) + S_j \delta $.
\end{claim}

Clearly the claim already implies the inequality we want to prove since there are at most $ \kappa $ edges that have been inserted since the beginning of the current phase which gives the following chain of inequalities:
\begin{equation*}
\dist_{G_0} (x, \s) = \dist_{G_0} (x_l, \s) \leq \dist_G (x_l, \s) + S_l \delta \leq \dist_G (x, \s) + \kappa \delta \, .
\end{equation*}

Now we proceed with the inductive proof of the claim
The induction base $ j = 0 $ is trivially true because $ x_j = \s $.
Now consider the induction step where we assume that the inequality holds for $ j $ and we have to show that it also holds for $ j+1 $.

Consider first the case that the edge $ (x_{j+1}, x_j) $ is one of the edges that have been inserted since the beginning of the current phase.
By the rule of the algorithm we know that $ \dist_{G_0} (x_{j+1}, \s) \leq \dist_{G_0} (x_j, \s) + \delta $ and by the induction hypothesis we have $ \dist_{G_0} (x_j, \s) \leq \dist_G (x_j, \s) + S_j \delta $.
By combining these two inequalities we get $ \dist_{G_0} (x_{j+1}, \s) \leq \dist_G (x_j, \s) + (S_j + 1) \delta $.
The desired inequality now follows because $ S_{j+1} = S_j + 1 $ and because $ \dist_G (x_j, \s) \leq \dist_G (x_{j+1}, \s) $ (on the shortest path $ \pi $, $ x_j $ is closer to $ \s $ than $ x_{j+1} $).

Now consider the case that the edge $ (x_{j+1}, x_j) $ is not one of the edges that have been inserted since the beginning of the current phase.
In that case the edge $ (x_{j+1}, x_j) $ in contained in the graph $ G_0 $ and thus $ \dist_{G_0} (x_{j+1}, \s) \leq \dist_{G_0} (x_j, \s) + 1 $.
By the induction hypothesis we have $ \dist_{G_0} (x_j, \s) \leq \dist_G (x_j, \s) + S_j \delta $.
By combining these two inequalities we get $ \dist_{G_0} (x_{j+1}, \s) \leq \dist_G (x_j, \s) + 1 + S_j \delta $.
Since $ x_{j+1} $ and $ x_j $ are neighbours on the shortest path $ \pi $ in $ G $, we have $ \dist_G (x_{j+1}, \s) = \dist_G (x_j, \s) + 1 $.
Therefore we get $ \dist_{G_0} (x_{j+1}, \s) \leq \dist_G (x_{j+1}, \s) + S_j \delta $.
Since $ S_{j+1} = S_j $, the desired inequality follows.
\end{proof}

\begin{remark}
In the proof of Lemma~\ref{lem:incremental_algorithm_main_invariant} we need the property that at most $ \kappa $ edges on the shortest path to the root have been inserted since the beginning of the current phase.
If we allow inserting $ \kappa / 2 $ nodes (together with their set of incident edges) we will see at most $ \kappa $ inserted edges on the shortest path to the root as each node appears at most once on this path and contributes at most $ 2 $ incident edges.
Thus, we can easily modify our algorithms to deal with node insertions with the same approximation guarantee and asymptotic running time.
\end{remark}

If an edge $ (u, v) $ is inserted into the graph such that the inequality $ \dist_{G_0} (u, \s) \leq \dist_{G_0} (v, \s) + \delta $ does not hold (and subsequently the algorithm calls the procedure \initialize), we cannot guarantee our bound on the additive error anymore. Nevertheless the algorithm makes progress in some sense: After the insertion, $ u $ has an edge to $ v $ whose initial distance to $ \s $ was significantly smaller than the one from $ u $ to $ s $.
This implies that the distance from $ u $ to $ \s $ has decreased by at least $ \delta $ since the beginning of the current phase.
Thus testing whether $ \dist_{G_0} (u, \s) > \dist_{G_0} (v, \s) + \delta $ is a fast way of testing whether $ \dist_{G_0} (u, \s) \geq \dist_G (u, \s) + \delta $, i.e., whether the distance between $ u $ and $ \s $ has decreased so much that a rebuild is necessary.
\begin{lemma}\label{lem:incremental_algorithm_distance_increase}
If an edge $ (u, v) $ is inserted such that $ \dist_{G_0} (u, \s) > \dist_{G_0} (v, \s) + \delta $, then $ \dist_{G_0} (u, \s) \geq \dist_G (u, \s) + \delta $.
\end{lemma}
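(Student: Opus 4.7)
The plan is to chain three observations: (i) because we are in the incremental setting, $ G_0 $ is a subgraph of $ G $ and hence distances in $ G $ are no larger than distances in $ G_0 $; (ii) the newly inserted edge $ (u,v) $ yields a one-step detour in $ G $, so $ \dist_G(u,\s) \leq \dist_G(v,\s) + 1 $; and (iii) the lemma's hypothesis is a strict integer inequality, so it sharpens to $ \dist_{G_0}(u,\s) \geq \dist_{G_0}(v,\s) + \delta + 1 $ (assuming, as throughout the paper, that $ \delta $ is a positive integer in an unweighted graph).

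First, I would combine (i) and (ii) to conclude
\begin{equation*}
\dist_G(u,\s) \leq \dist_G(v,\s) + 1 \leq \dist_{G_0}(v,\s) + 1 \, .
\end{equation*}
Then, using the strengthened hypothesis from (iii), I would substitute $ \dist_{G_0}(v,\s) \leq \dist_{G_0}(u,\s) - \delta - 1 $ to obtain
\begin{equation*}
\dist_G(u,\s) \leq \dist_{G_0}(u,\s) - \delta \, ,
\end{equation*}
which is exactly the claim after rearrangement.

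The main obstacle, if any, is only bookkeeping: making sure that the strict inequality in the hypothesis is translated correctly into an integer gap of at least $ \delta + 1 $, which cancels the $ +1 $ coming from the detour through the newly inserted edge. Everything else is a direct consequence of monotonicity of distances under edge insertions and the triangle-type bound provided by the new edge $ (u,v) $, so no induction or auxiliary construction is needed.
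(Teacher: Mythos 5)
Your proof is correct and matches the paper's argument essentially step for step: both convert the strict inequality into $\dist_{G_0}(v,\s) \leq \dist_{G_0}(u,\s) - \delta - 1$, both use the newly inserted edge to get $\dist_G(u,\s) \leq \dist_G(v,\s) + 1$, and both use monotonicity of distances under edge insertion to chain these into the conclusion. No meaningful difference.
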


\begin{proof}
We have inserted an edge $ (u, v) $ such that $ \dist_{G_0} (u, \s) > \dist_{G_0} (v, \s)  + \delta $ (which is equivalent to $ \dist_{G_0} (v, \s) \leq \dist_{G_0} (u, \s) - \delta - 1 $).
In the current graph $ G $, we already have inserted the edge $ (u, v) $ and therefore $ \dist_G (u, \s) \leq \dist_G (v, \s) + 1 $.
Since $ G $ is the result of inserting edges into $ G_0 $, distances in $ G $ are not longer than in $ G_0 $, and in particular $ \dist_G (v, \s) \leq \dist_{G_0} (v, \s) $.
Therefore we arrive at the following chain of inequalities:
\begin{equation*}
\dist_G (u, \s) \leq \dist_G (v, \s) + 1 \leq \dist_{G_0} (v, \s) + 1 \leq \dist_{G_0} (u, \s) - \delta - 1 + 1 = \dist_{G_0} (u, \s) - \delta
\end{equation*}
Thus, we get $ \dist_{G_0} (u, \s) \geq \dist_G (u, \s) + \delta $.
\end{proof}

Since we consider undirected, unweighted graphs, a large decrease in distance for one node also implies a large decrease in distance for many other nodes.
\begin{lemma}\label{lem:distance_increase}
Let $ H = (V, E) $ and $ H' = (V, E') $ be unweighted, undirected graphs such that $ H $ is connected and $ E \subseteq E' $.
If there is a node $ y \in V $ such that $ \dist_H (y, \s) \geq \dist_{H'} (y, \s) + \delta $, then
$ \sum_{x \in V} \dist_H (x, \s) \geq \sum_{x \in V'} \dist_{H'} (x, \s) + \Omega (\delta^2) $.
\end{lemma}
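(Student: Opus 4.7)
The plan is to exhibit explicitly $ \Omega(\delta) $ nodes whose distance from $ \s $ drops by $ \Omega(\delta) $ each, giving a total decrease of $ \Omega(\delta^2) $. The natural candidates are the nodes along a shortest $ y $-to-$ \s $ path in the older graph $ H $.

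First I would fix a shortest path $ y = y_0, y_1, \ldots, y_k = \s $ in $ H $, where $ k = \dist_H(y, \s) $. Since $ \dist_{H'}(y, \s) \geq 0 $, the hypothesis gives $ k \geq \delta $, so the path has at least $ \lfloor \delta/2 \rfloor + 1 $ nodes. For every $ i $ in this range, the subpath $ y_i, \ldots, y_0 $ is itself a shortest path in $ H $, hence $ \dist_H(y_i, \s) = k - i $. On the other hand, the triangle inequality in $ H' $ combined with the fact that $ E \subseteq E' $ (so $ \dist_{H'}(y_i, y) \leq \dist_H(y_i, y) = i $) yields
\begin{equation*}
\dist_{H'}(y_i, \s) \leq \dist_{H'}(y, \s) + i \leq (k - \delta) + i \, .
\end{equation*}
Subtracting, I obtain the per-node bound $ \dist_H(y_i, \s) - \dist_{H'}(y_i, \s) \geq \delta - 2i $.

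Next I would sum these inequalities over $ i = 0, 1, \ldots, \lfloor \delta/4 \rfloor $. Each term is at least $ \delta/2 $, and there are $ \Omega(\delta) $ of them, so the contribution from the path nodes alone is $ \Omega(\delta^2) $. For all remaining nodes $ x \in V $ not on the chosen path, monotonicity $ \dist_H(x, \s) \geq \dist_{H'}(x, \s) $ (which holds because $ E \subseteq E' $) ensures they contribute nonnegatively to $ \sum_{x \in V} (\dist_H(x, \s) - \dist_{H'}(x, \s)) $. Combining the two gives the claimed $ \Omega(\delta^2) $ total.

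The only subtlety, and really the only place where one must be careful, is that the path nodes $ y_0, \ldots, y_{\lfloor \delta/4 \rfloor} $ be pairwise distinct (so that summing does not double-count) and that the path be long enough to admit that many indices. Distinctness is automatic for a shortest path in an unweighted graph, and the length bound follows from $ k \geq \delta $; the connectedness of $ H $ is used implicitly to guarantee $ \dist_H(y, \s) $ is finite so that such a path exists. I expect no further obstacles.
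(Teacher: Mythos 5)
Your proof is correct and follows essentially the same route as the paper: both fix a shortest $y$-to-$\s$ path in $H$, use the triangle inequality in $H'$ together with $\dist_{H'}(\cdot,y)\le\dist_H(\cdot,y)$ to get the per-node bound $\dist_H(x,\s)-\dist_{H'}(x,\s)\ge\delta-2\dist_H(x,y)$, sum over the first $\Theta(\delta)$ path nodes, and invoke monotonicity for the rest. The only cosmetic difference is that you index path nodes by $i$ and truncate at $\lfloor\delta/4\rfloor$ to get a uniform $\delta/2$ lower bound per term, while the paper sums up to $\dist_H(x,y)<\delta/2$ and evaluates the arithmetic series directly.
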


\begin{proof}
Let $ \pi $ denote the shortest path from $ y $ to $ s $ of length $ \dist_{H} (y, s) $ in $ H $.
We first bound the distance change for single nodes.
\begin{claim}
For every node $ x $ on $ \pi $ we have $ \dist_{H} (x, s) \geq \dist_{H'} (x, s) + \delta - \dist_H (x, y) - \dist_{H'} (x, y) $.
\end{claim}
\begin{proof}[of Claim]
By the triangle inequality we have $ \dist_{H'} (x, \s) \leq \dist_{H'} (x, y) + \dist_{H'} (y, \s) $, which is equivalent to $ \dist_{H'} (y, \s) \geq \dist_{H'} (x, \s) - \dist_{H'} (x, y) $.
By this inequality and the fact that $ x $ lies on $ \pi $, the shortest path from $ y $ to $ \s $ in $ H $, we have
\begin{equation*}
\dist_H (y, x) + \dist_H (x, \s) = \dist_H (y, \s) \geq \dist_{H'} (y, \s) + \delta \geq \dist_{H'} (x, \s) - \dist_{H'} (x, y) + \delta \, .
\end{equation*}
Since $ \dist_H (y, x) = \dist_H (x, y) $ the claimed inequality follows.
\end{proof}
From the claim and the fact that $ \dist_{H'} (x, y) \leq \dist_H (x, y) $ we conclude that
\begin{align*}
\sum_{x \in \pi, \dist_{H}(x, y) < \delta/2} \dist_{H} (x, s) &\geq \sum_{x \in \pi, \dist_{H}(x, y) < \delta/2} (\dist_{H'} (x, s) + \delta - 2 \dist_{H} (x, y)) \\
  &= \left( \sum_{x \in \pi, \dist_{H}(x, y) < \delta/2} \dist_{H'} (x, s) \right) + \left( \sum_{x \in \pi, \dist_{H}(x, y) < \delta/2} (\delta - 2 \dist_{H} (x, y)) \right) \\
  &\geq \left( \sum_{x \in \pi, \dist_{H}(x, y) < \delta/2} \dist_{H'} (x, s) \right) + \left( \sum_{j=1}^{\lfloor \delta/2 \rfloor} (\delta - 2j) \right) \\
  &= \left( \sum_{x \in \pi, \dist_{H}(x, y) < \delta/2} \dist_{H'} (x, s) \right) + \delta (\lfloor \delta/2 \rfloor) - \lfloor \delta/2 \rfloor (\lfloor \delta/2 \rfloor + 1) \\
  &= \left( \sum_{x \in \pi, \dist_{H}(x, y) < \delta/2} \dist_{H'} (x, s) \right) + \Omega (\delta^2) \, .
\end{align*}

Finally, we get:
\begin{align*}
\sum_{x \in V} \dist_{H} (x, s) &= \left( \sum_{x \in \pi, \dist_{H}(x, y) < \delta/2} \dist_{H} (x, s) \right) + \sum_{x \notin \pi \text{ or } \dist_{H}(x, y) \geq \delta/2} \underbrace{\dist_{H} (x, s)}_{\geq \dist_{H'} (x, s)} \\
  &\geq \left( \sum_{x \in \pi, \dist_{H}(x, y) < \delta/2} \dist_{H'} (x, s) \right) + \Omega (\delta^2) + \sum_{x \notin \pi \text{ or } \dist_{H}(x, y) \geq \delta/2} \dist_{H'} (x, s) \\
  &= \left( \sum_{x \in V} \dist_{H'} (x, s) \right) + \Omega (\delta^2) \, . \qedhere
\end{align*}
\end{proof}

The quadratic distance decrease is the key observation for the efficiency of our algorithm as it limits the number of times a new phase starts, which is the expensive part of our algorithm.
\begin{lemma}[Running Time]\label{lem:running_time_incremental_abstract}
For every $ \kappa \geq 1 $ and $ \delta \geq 1 $, the total update time of Algorithm~\ref{alg:insertion_algorithm} is
$ O ( T_{\mathrm{BFS}} (X) \cdot (q / \kappa +  n X / \delta^2 + 1) + q) $,
where $ T_{\mathrm{BFS}} (X) $ is an upper bound on the time needed for computing a BFS tree up to depth~$ X $.
\end{lemma}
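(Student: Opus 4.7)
The plan is to charge the total running time to two sources: the $O(1)$ bookkeeping per \textsc{Insert} call (which sums to $O(q)$) and the calls to \textsc{Initialize}, each costing $T_{\mathrm{BFS}}(X)$. It therefore suffices to show that \textsc{Initialize} is invoked at most $O(q/\kappa + nX/\delta^2 + 1)$ times over the entire sequence of updates.

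Counting the calls triggered by the counter $k$ reaching $\kappa$ is straightforward: since $k$ is incremented exactly once per insertion and reset to $0$ only inside \textsc{Initialize}, at most $\lfloor q/\kappa \rfloor$ such calls occur; adding the single initial call during preprocessing contributes $O(q/\kappa + 1)$.

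For the calls triggered by the main rule in Line~\ref{lin:main_rule_incremental_algorithm}, I will introduce the potential $\Phi(G) = \sum_{x \in V} \min(X, \dist_G(x, \s))$, which is monotonically non-increasing under edge insertions, starts at most $nX$, and is always non-negative. For any consecutive pair of \textsc{Initialize} calls in which the second is triggered by the main rule on an inserted edge $(u, v)$, I will argue that $\Phi$ drops by $\Omega(\delta^2)$ between them. Lemma~\ref{lem:incremental_algorithm_distance_increase} gives $\dist_{G_0}(u, \s) \geq \dist_G(u, \s) + \delta$, and $u$ must lie in the current BFS tree, so $\dist_{G_0}(u, \s) \leq X$; hence every node on the shortest path from $u$ to $\s$ in $G_0$ is within distance $X$ of $\s$ in both $G_0$ and $G$. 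This means the $\min(X, \cdot)$ truncation has no effect on the nodes that contribute to the $\Omega(\delta^2)$ drop in the proof of Lemma~\ref{lem:distance_increase}, so applying that lemma with $H = G_0$, $H' = G$, and $y = u$ yields $\Phi(G_0) - \Phi(G) = \Omega(\delta^2)$.

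Since $\Phi$ can decrease in total by at most $nX$ over the whole execution, and each main-rule-triggered call absorbs an $\Omega(\delta^2)$ share of that decrease (the pairs of consecutive rebuilds being disjoint in time), there are at most $O(nX/\delta^2)$ such calls. Combining both bounds on the number of \textsc{Initialize} calls, multiplying by $T_{\mathrm{BFS}}(X)$, and adding the $O(q)$ bookkeeping overhead gives the stated total. The only genuinely delicate step I anticipate is verifying that the depth-$X$ truncation in $\Phi$ does not interfere with the potential-drop argument of Lemma~\ref{lem:distance_increase}; the rest is straightforward accounting.
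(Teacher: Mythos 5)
Your proof is correct and follows essentially the same route as the paper's: charge $O(1)$ per insertion, $O(q/\kappa+1)$ counter-triggered rebuilds, and $O(nX/\delta^2)$ rule-triggered rebuilds via the $\Omega(\delta^2)$ drop from Lemmas~\ref{lem:incremental_algorithm_distance_increase} and~\ref{lem:distance_increase}. Your explicit introduction of the truncated potential $\Phi(G)=\sum_{x\in V}\min(X,\dist_G(x,\s))$ is a slightly more careful formalization of the paper's informal statement that ``the sum of the distances is at most $nX$'' (which, without the $\min(X,\cdot)$ truncation, would not be both monotone and bounded by $nX$ when the graph contains nodes beyond distance $X$), but the underlying idea is the same.
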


\begin{proof}
Besides the constant time per insertion we have to compute a BFS tree of depth $ X $ at the beginning of every phase.
The first cause for starting a new phase is that the number of edge deletions in a phase reaches $ \kappa $, which can happen at most $ q / \kappa $ times.
The second cause for starting a new phase is that we insert an edge $ (u, v) $ such that $ \dist_{G_0} (u, \s) > \dist_{G_0} (v, \s) + \delta $.
By Lemmas~\ref{lem:incremental_algorithm_distance_increase} and~\ref{lem:distance_increase} this implies that the sum of the distances of all nodes to $ \s $ has increased by at least $ \Omega (\delta^2) $ since the beginning of the current phase.
There are at most $ n $ nodes of distance at most $ X $ to $ s $ which means that the sum of the distances is at most $ n X $.
Therefore such a decrease can occur at most $ O (n X / \delta^2) $ times.
The overall running time thus is $ O ( T_{\mathrm{BFS}} (X) \cdot (q / \kappa + n X / \delta^2 + 1) + q) $.
The $1$-term is just a technical necessity as the BFS tree has to be computed at least once.
\end{proof}

The algorithm above provides an additive approximation.
In the following we turn this into a multiplicative approximation for a fixed distance range.
Using a multi-layer approach we enhance this to a multiplicative approximation for the full distance range in Sections~\ref{sec:incremental_sequential_model} (sequential model) and~\ref{sec:incremental_distributed_model} (distributed model).

\begin{lemma}[Multiplicative Approximation]\label{lem:multiplicative_approximation_abstract}
Let $ 0 < \epsilon \leq 1 $, $ X \leq n $, and set $ \gamma = \epsilon / 4 $.
If $ \gamma^2 q X \geq n $ and $ \gamma n X^2 \geq q $, then by setting
$ \kappa = q^{1/3} X^{1/3} \gamma^{2/3} / n^{1/3} $
and
$ \delta = n^{1/3} X^{2/3} \gamma^{1/3} / q^{1/3} $,
Algorithm~\ref{alg:insertion_algorithm} has a total update time of
\begin{equation*}
O \left(T_{\mathrm{BFS}} (X) \cdot \frac{q^{2/3} n^{1/3}}{\epsilon^{2/3} X^{1/3}} + q \right) \, ,
\end{equation*}
where $ T_{\mathrm{BFS}} (X) $ is an upper bound on the time needed for computing a BFS tree up to depth~$ X $.
Furthermore, it provides the following approximation guarantee:
$ \dist_{G_0} \geq \dist_G (x, \s) $ for every node $ x $ and $ \dist_{G_0} (x, \s) \leq (1 + \epsilon) \dist_G (x, \s) $ for every node $ x $ such that $ X/2 \leq \dist_{G_0} (x, \s) \leq X $.
\end{lemma}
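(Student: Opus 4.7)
The plan is to apply Lemma~\ref{lem:incremental_algorithm_main_invariant} and Lemma~\ref{lem:running_time_incremental_abstract} with the prescribed values of $\kappa$ and $\delta$, and then verify three algebraic facts: that the conditions $\gamma^2 q X \geq n$ and $\gamma n X^2 \geq q$ ensure $\kappa \geq 1$ and $\delta \geq 1$ (so that the preceding lemmas are applicable); that $\kappa\delta$ evaluates to exactly $\gamma X = \epsilon X / 4$; and that the two summands $q/\kappa$ and $n X / \delta^2$ appearing in the running-time bound coincide and collapse to the advertised expression.

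First I would substitute the given formulas to observe that $\kappa \geq 1$ is equivalent to $\gamma^2 q X \geq n$ and that $\delta \geq 1$ is equivalent to $\gamma n X^2 \geq q$, matching the stated hypotheses exactly. A one-line multiplication then gives $\kappa\delta = \gamma X = \epsilon X / 4$. For the approximation claim, the inequality $\dist_G(x, \s) \leq \dist_{G_0}(x, \s)$ is immediate because $G$ arises from $G_0$ by insertions. For a node $x$ with $X/2 \leq \dist_{G_0}(x, \s) \leq X$, Lemma~\ref{lem:incremental_algorithm_main_invariant} gives the additive bound $\dist_{G_0}(x, \s) - \dist_G(x, \s) \leq \kappa\delta = \epsilon X / 4$; combining with $\dist_{G_0}(x, \s) \geq X/2$ and $\epsilon \leq 1$ forces $\dist_G(x, \s) \geq X/4$, so $\kappa\delta = \epsilon X / 4 \leq \epsilon \dist_G(x, \s)$, which upgrades the additive bound into the desired $(1+\epsilon)$-multiplicative one.

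For the running time I would plug the chosen values into Lemma~\ref{lem:running_time_incremental_abstract} and compute $q/\kappa = n X / \delta^2 = q^{2/3} n^{1/3} / (\gamma^{2/3} X^{1/3})$. Substituting $\gamma = \epsilon/4$ and absorbing constants into the big-$O$ gives the claimed total update time; the $+1$ inside the parentheses of Lemma~\ref{lem:running_time_incremental_abstract} is dominated because $\kappa \geq 1$ forces $q/\kappa \geq 1$ whenever $q \geq \kappa$. I do not anticipate a real conceptual obstacle: the proof is essentially a calculation. The only subtle design point, which is hidden inside the choice of $\kappa$ and $\delta$, is that the stated values are exactly the ones obtained by simultaneously requiring $q/\kappa = n X / \delta^2$ (to balance the two sources of rebuilds) and $\kappa\delta = \gamma X$ (so that the additive error can be absorbed as an $\epsilon$-relative error at distance scale $X/2$); this balancing is what one should expect to verify carefully.
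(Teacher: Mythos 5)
Your proposal is correct and follows essentially the same approach as the paper: apply Lemma~\ref{lem:incremental_algorithm_main_invariant} and Lemma~\ref{lem:running_time_incremental_abstract} with the given $\kappa,\delta$, verify that the hypotheses on $q,X,n$ force $\kappa,\delta\geq 1$, observe $\kappa\delta=\gamma X$, and show this additive error is at most $\epsilon\,\dist_G(x,\s)$ when $\dist_{G_0}(x,\s)\geq X/2$. Your final arithmetic step (using $\epsilon\leq 1$ to conclude $\dist_G(x,\s)\geq X/2-\epsilon X/4\geq X/4$, hence $\kappa\delta=\epsilon X/4\leq \epsilon\,\dist_G(x,\s)$) is a slightly more direct route to the same conclusion the paper reaches via the inequality $\gamma\leq 1/(2(1+1/\epsilon))$, but it is mathematically equivalent.
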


\begin{proof}
To simplify the notation a bit we define $ A = \kappa \delta $, which gives $ A = \gamma X $.
By Lemma~\ref{lem:running_time_incremental_abstract}, Algorithm~\ref{alg:insertion_algorithm} runs in time
\begin{equation*}
O \left( T_{\mathrm{BFS}} (X) \cdot \left( \frac{q}{\kappa} + \frac{n X}{\delta^2} + 1 \right) + q \right) \, .
\end{equation*}
It is easy to check that by our choices of $ \kappa $ and $ \delta $ the two terms appearing in the running time are balanced and we get
\begin{equation*}
\frac{q}{\kappa} = \frac{n X}{\delta^2} = \frac{ q^{2/3} n^{1/3}}{\gamma^{2/3} X^{1/3}} = O \left(\frac{ q^{2/3} n^{1/3}}{\epsilon^{2/3} X^{1/3}} \right) \, .
\end{equation*}
Furthermore the inequalities $ \gamma^2 q X \geq n $ and $ \gamma n X^2 \geq q $ ensure that $ \kappa \geq 1 $ and $ \delta \geq 1 $.

We now argue that the approximation guarantee holds.
By Lemma~\ref{lem:incremental_algorithm_main_invariant}, we already know that
\begin{equation*}
\dist_G (x, \s) \leq \dist_{G_0} (x, \s) \leq \dist_G (x, \s) + A
\end{equation*}
for every node $ x $ such that $ \dist_{G_0} (x, \s) \leq X $.
We now show that our choices of $ \kappa $ and~$ \delta $ guarantee that $ A \leq \epsilon \dist_G (x, \s) $, for every node $ x $ such that $ \dist_{G_0} (x, \s) \geq X/2 $, which immediately gives the desired inequality.

Assume that $ \dist_{G_0} (x, \s) \leq \dist_G (x, \s) + A $ and that $ \dist_{G_0} (x, \s) \geq X/2 $.
We first show that
\begin{equation*}
\gamma \leq \frac{1}{2 (1 + \frac{1}{\epsilon})} \, .
\end{equation*}
Since $ \epsilon \leq 1 $ we have $ 2 (\epsilon + 1) \leq 4 $.
It follows that
\begin{equation*}
\frac{1}{2 (1 + \frac{1}{\epsilon})} \geq \frac{\epsilon}{4} = \gamma \, .
\end{equation*}
Therefore we get the following chain of inequalities:
\begin{equation*}
\left( 1 + \frac{1}{\epsilon} \right) A = \left( 1 + \frac{1}{\epsilon} \right) \gamma X \leq \frac{\left( 1 + \frac{1}{\epsilon} \right) X}{2 (1 + \frac{1}{\epsilon})} = \frac{X}{2} \leq \dist_{G_0} (x, \s) \, .
\end{equation*}
We now subtract $ A $ from both sides and get
\begin{equation*}
\frac{A}{\epsilon} \leq \dist_{G_0} (x, \s) - A \, .
\end{equation*}
Since $ \dist_{G_0} (x, \s) - A \leq \dist_G (x, \s) $ by assumption, we finally get $ A \leq \epsilon \dist_G (x, \s) $.
\end{proof}

\subsection{Sequential model}\label{sec:incremental_sequential_model}

It is straightforward to use the abstract framework of Section~\ref{sec:incremental_abstract} in the sequential model.
First of all, note that in the sequential model computing a BFS tree takes time $ O (m) $, regardless of the depth.
We run $ O (\log{n}) $ ``parallel'' instances of Algorithm~\ref{alg:insertion_algorithm}, where each instance provides a $ (1 + \epsilon) $-approximation for nodes in some distance range from $ X/2 $ to $ X $.
However, when $ X $ is small enough, then instead of maintaining the approximate distance with our own algorithm it is more efficient to maintain the exact distance using the algorithm of Even and Shiloach~\cite{EvenS81}.

\begin{theorem}\label{thm:incremental_SSSP_RAM_model}
In the sequential model, there is an incremental $(1+\epsilon)$-approximate SSSP algorithm for inserting up to $ q $ edges that has a total update time of $ O (m n^{1/4} \sqrt{q} \log{n} / \sqrt{\epsilon}) $ where $ m $ is the number of edges in the final graph.
It answers distance and path queries in optimal worst-case time.
\end{theorem}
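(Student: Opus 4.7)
My plan is to apply Lemma~\ref{lem:multiplicative_approximation_abstract} in parallel to a geometric sequence of distance ranges, and to cover small distances with an exact Even--Shiloach structure (Theorem~\ref{thm:Even_Shiloach}). In the sequential model a single truncated BFS costs $T_{\mathrm{BFS}}(X) = O(m)$, independently of $X$, so the per-range cost from Lemma~\ref{lem:multiplicative_approximation_abstract} becomes $O(m q^{2/3} n^{1/3}/(\epsilon^{2/3} X^{1/3}) + q)$.

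More concretely, fix a threshold $X^{\star}$ to be chosen at the end. For each $i$ with $X^{\star} \le 2^i \le 2n$, run an independent copy of Algorithm~\ref{alg:insertion_algorithm} with parameter $X_i := 2^i$ and with $\kappa_i,\delta_i$ set as in Lemma~\ref{lem:multiplicative_approximation_abstract}, so that this copy yields a $(1+\epsilon)$-approximation for every node $v$ with $X_i/2 \le \dist_{G_0^{(i)}}(v,\s) \le X_i$, where $G_0^{(i)}$ denotes the graph at the start of the current phase of the $i$-th copy. For distances below $X^{\star}$, maintain an Even--Shiloach tree of depth $X^{\star}$, which by Theorem~\ref{thm:Even_Shiloach} costs $O(m X^{\star})$ in total and yields \emph{exact} distances up to $X^{\star}$.

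For the running time, observe that $\sum_{i : X_i \ge X^{\star}} (X_i)^{-1/3}$ is a geometric series dominated by its first term, so the total cost of all copies of Algorithm~\ref{alg:insertion_algorithm} is $O\bigl(m q^{2/3} n^{1/3}/(\epsilon^{2/3} (X^{\star})^{1/3})\bigr)$, plus an additive $O(q \log n)$. Adding the $O(m X^{\star})$ from Even--Shiloach and balancing the two dominant terms gives $X^{\star} = \Theta\bigl(n^{1/4} \sqrt{q}/\sqrt{\epsilon}\bigr)$, resulting in total update time $O\bigl(m n^{1/4}\sqrt{q}\log n / \sqrt{\epsilon}\bigr)$ as claimed; the additive $O(q \log n)$ is subsumed since $q \le m$. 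For the regime where the preconditions $\gamma^2 q X \ge n$ or $\gamma n X^2 \ge q$ of Lemma~\ref{lem:multiplicative_approximation_abstract} fail, or for $X^{\star} \ge n$, I would simply fall back to a single Even--Shiloach tree of depth $n$, whose cost $O(mn)$ is absorbed by the target bound in that regime.

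For queries, each node $v$ stores the minimum of its distance estimate across the Even--Shiloach tree (if applicable) and all layered instances containing $v$; this estimate is updated in $O(1)$ each time $v$'s label changes in any copy, so distance queries are $O(1)$ and path queries take $O(|P|)$ by following parent pointers in the BFS tree that achieved the minimum. The main technical point I expect to handle carefully is showing that every node is always covered by at least one layer: a node with true distance $d^{*}$ lies in the unique layer $i$ with $X_i/2 < d^{*} \le X_i$ provided $\dist_{G_0^{(i)}}(v,\s) \le X_i$, and since $\dist_{G_0^{(i)}}(v,\s) \le d^{*} + \kappa_i \delta_i \le (1 + \epsilon/4) X_i$ by Lemma~\ref{lem:incremental_algorithm_main_invariant}, the adjacent layer $i+1$ always contains $v$ (with $X_{i+1}/2 \ge d^{*}$), so the minimum over layers yields the required $(1+\epsilon)$-approximation.
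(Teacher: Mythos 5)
Your high-level plan — Even--Shiloach up to a threshold $X^{\star}$, $O(\log n)$ geometric layers of Algorithm~\ref{alg:insertion_algorithm} above it, balance $mX^{\star}$ against $mq^{2/3}n^{1/3}/(\epsilon^{2/3}(X^{\star})^{1/3})$ to get $X^{\star} = \Theta(n^{1/4}\sqrt{q}/\sqrt{\epsilon})$ — matches the paper, and the derivation of $X^{\star}$ is fine. The gap is in the \emph{coverage} argument, which is exactly the step you flagged as the one to do carefully.

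You run the layer instances as \emph{independent} copies and claim a node $v$ with current distance $d^{*}$ is covered by the layer $i$ with $X_i/2 < d^{*}\le X_i$ provided $\dist_{G_0^{(i)}}(v,\s)\le X_i$, and otherwise by layer $i+1$. Two problems. First, the bound you invoke, $\dist_{G_0^{(i)}}(v,\s)\le d^{*}+\kappa_i\delta_i$, comes from Lemma~\ref{lem:incremental_algorithm_main_invariant}, but that lemma is conditional on $\dist_{G_0^{(i)}}(v,\s)\le X_i$ — precisely the hypothesis you are trying to establish — so the inference is circular; and in any case if that hypothesis already holds you do not need layer $i+1$. Second, and more seriously, to use layer $i+1$ you need $X_{i+1}/2\le\dist_{G_0^{(i+1)}}(v,\s)\le X_{i+1}$ as required by Lemma~\ref{lem:multiplicative_approximation_abstract}; you write ``with $X_{i+1}/2\ge d^{*}$'', but that only gives $d^{*}\le X_i$, not the needed \emph{lower} bound $\dist_{G_0^{(i+1)}}(v,\s)\ge X_i$. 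With independent copies this can genuinely fail: if instance $i+1$ last started a phase after $v$'s distance had already dropped to, say, slightly above $d^{*}$ but strictly below $X_i$, then $v$ is outside $[X_{i+1}/2,X_{i+1}]$ for instance $i+1$, and the additive error $\kappa_{i+1}\delta_{i+1}=\gamma X_{i+1}$ charged to $v$ there is $\Theta(\epsilon X_{i+1})\gg\epsilon d^{*}$, while layer $i$ has $v$ outside its depth cutoff entirely (in your assumed case $\dist_{G_0^{(i)}}(v,\s)>X_i$). Taking the pointwise minimum of layer estimates does not rescue this, since every layer's estimate is only guaranteed to be an \emph{upper} bound and you need some single layer to give a tight one.

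The fix, which is what the paper does, is to \emph{synchronize} the layers: whenever instance $i$ begins a new phase, restart all instances $j\le i$ as well. Then if $i_v$ is the smallest index with $\dist_{G_0^{(i_v)}}(v,\s)\le X_{i_v}$, synchronization plus monotonicity of distances under insertion forces $\dist_{G_0^{(i_v)}}(v,\s)\ge X_{i_v}/2$: otherwise, at instance $i_v-1$'s current phase start — which is no earlier than instance $i_v$'s — the distance would also be $<X_{i_v-1}$, contradicting minimality. This cascade is also where the $\log n$ factor genuinely comes from: each phase restart of instance $i$ triggers $O(\log n)$ BFS computations instead of one, so the per-instance cost inherits a $\log n$ multiplier and the sum $O\bigl(\sum_i m q^{2/3}n^{1/3}\log n/(\epsilon^{2/3}X_i^{1/3})\bigr)$ geometrically sums to $O(mn^{1/4}\sqrt{q}\log n/\sqrt{\epsilon})$. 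In your accounting the main term appeared without $\log n$, and the $\log n$ you wrote down in the final bound was not actually derived.
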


\begin{proof}
If $ q \leq 8 n^{1/2} / \epsilon $, we recompute a BFS tree from scratch after every insertion.
This takes time $ O (m q) = O (m q^{1/2} q^{1/2}) = O (m n^{1/4} q^{1/2} / \epsilon^{1/2}) $.

If $ q > 8 n^{1/2} / \epsilon $, the algorithm is as follows.
Let $ X^* $ be the smallest power of $ 2 $ greater than or equal to $ 2 n^{1/4} q^{1/2} / \epsilon^{1/2} $ (i.e., $ X^* = 2^{\lceil \log{(2 n^{1/4} q^{1/2} / \epsilon^{1/2})} \rceil} $).
First of all, we maintain an Even-Shiloach tree up to depth $ X^* $, which takes time $ O (m X^*) = O (m n^{1/4} q^{1/2} / \epsilon^{1/2}) $ by Theorem~\ref{thm:Even_Shiloach}.
Additionally, we run $ O (\log n) $ instances of Algorithm~\ref{alg:insertion_algorithm}, one for each $ \log{X^*} \leq i \leq \lceil \log{n} \rceil $.
For the $i$-th instance we set the parameter~$ X $ to $ X_i = 2^i $ and $ \kappa $~and~$ \delta $ as in Lemma~\ref{lem:multiplicative_approximation_abstract}.
Every time we start a new phase for instance $ i $, we also start a new phase for every instance $ j $ such that $ j \leq i $.
This guarantees that if a node leaves the range $ [X_i/2, X_i] $ (which in the incremental model can only happen if the distance to the root goes below $ X_i/2 $) it will immediately be covered by a lower range.
Since the graph is connected, we now have the following property:
for every node $ v $ with distance more than $ X^* $ to $ \s $ there is at least one index $ i $ such that $ v $ is in the range $ [X_i/2, X_i] $, i.e., at the beginning of the current phase of instance~$ i $ the distance from $ v $ to $ s $ was between $ X_i / 2 $ and $ X_i $.
By Lemma~\ref{lem:multiplicative_approximation_abstract} this previous distance is a $ (1+\epsilon) $-approximation of the current distance.
The algorithm can, at no overhead in asymptotic running time, easily track the smallest $ i $ such that $ v $ is in the range $ [X_i/2, X_i] $ for every node $ v $.

The cost of starting a new phase for every instance $ j \leq i $ is $ O (m \log{n}) $ since we have to construct a BFS tree up to depth $ X_j $ for all $ j \leq i $.
By Lemma~\ref{lem:multiplicative_approximation_abstract} the running time of the $i$-th instance of Algorithm~\ref{alg:insertion_algorithm} therefore is $ O (m q^{2/3} n^{1/3} \log{n} / (\epsilon^{2/3} X_i^{1/3})) $, which over all instances gives a running time of
\begin{equation*}
O \left( \sum_{\log X^* \leq i \leq \lceil \log n \rceil} \frac{m q^{2/3} n^{1/3} \log{n}}{\epsilon^{2/3} X_i^{1/3}} \right) = O \left( \frac{m q^{2/3} n^{1/3} \log{n}}{\epsilon^{2/3} {X^*}^{1/3}} \right) = O \left( \frac{m n^{1/4} q^{1/2} \log{n}}{\epsilon^{1/2}} \right) \, .
\end{equation*}
Note that for each instance $ i $ of Lemma~\ref{lem:multiplicative_approximation_abstract} only applies if $ \gamma^2 q X_i \geq n $ and $ \gamma n X_i^2 \geq q $.
These two inequalities hold because $ q $ and $ X^* $ are large enough:
\begin{gather*}
\gamma^2 q X_i = \epsilon^2 q X_i / 16 \geq \epsilon^2 q X^* / 16 \geq \epsilon^{3/2} q^{3/2} n^{1/4} / 8 \geq \epsilon^{3/2} n^{3/4} n^{1/4} / \epsilon^{3/2} = n \\
\gamma n X_i^2 = \epsilon n X_i^2 / 4 \geq \epsilon n (X^*)^2 / 4 \geq 4 \epsilon n^{3/2} q / (4 \epsilon) = n^{3/2} q \geq q
\end{gather*}

Finally, we argue that the number $ q $ of insertions does not have to be known beforehand.
We use a doubling approach for guessing the value of $ q $ where the $i$-th guess is $ q_i = 2^i $.
When the number of insertions exceeds our guess $ q_i $, we simply restart the algorithm and use the guess $ q_{i+1} = 2 q_i $ from now on.
The total running time for this approach is $ O (\sum_{i=0}^{\lceil \log q \rceil} m n^{1/4} q_i^{1/2} \log{n} / \epsilon^{1/2}) $ which is $ O (m n^{1/4} q^{1/2} \log{n} / \epsilon^{1/2}) $.
\end{proof}

\subsection{Distributed Model}\label{sec:incremental_distributed_model}

In the distributed model we use the same multi-layer approach as in the sequential model.
However, we have to consider some additional details for implementing the algorithm because not all information is globally available to every node in the distributed model.
Computing a BFS tree up to depth $ X $ takes time $ T_{\mathrm{BFS}} = O(X) $ in the distributed model.
In the running time analysis of Lemma~\ref{lem:running_time_incremental_abstract} we thus charge time $ O (X) $ to every phase and constant time to every insertion.
We now argue that this is enough to implement the algorithm in the distributed model.

After the insertion of an edge $ (u, v) $ the nodes $ u $ and $ v $ have to compare their initial distances $ \dist_{G_0} (u, \s) $ and $ \dist_{G_0} (v, \s) $.
They can exchange these numbers with a constant number of messages which we account for by charging constant time to every insertion.

The root node $ \s $ has to coordinate the phases and thus needs to gather some special information.
The first cause for starting a new phase is when the level of some node decreases by at least $ \delta $.
If a node detects a level decrease by at least $ \delta $, it has to inform the root $ \s $ about the decrease so that $ \s $ can initiate the beginning of the next phase.
The tree maintained by our algorithm, which has depth at most $ X $, can be used to send this message.
Therefore the total time needed for sending this message is $ O(X) $, which we charge to the current phase.
Note that, similar to recomputing the BFS tree, this happens in a recovery stage during which no new edges are inserted.

The second cause for starting a new phase is that the number of edge insertions since the beginning of the current phase exceeds $ \kappa $.
Therefore it is necessary that the root~$ \s $ knows the number of edges that have been inserted.
We count the number of insertions at the root as follows.
After each insertion of an edge $ (u, v) $ the node $ v $ sends a message to the root to inform it about the edge insertion.
We will make sure that this message arrives at the root with small enough delay; in particular each insertion message will arrive at the root after $ \kappa / 2 $ recovery stages.
Again, the tree maintained by our algorithm, which has depth at most $ X $, can be used to send the insertion messages to the root.
During each recovery stage we move up all the insertion messages that have not yet arrived at the root along $ 2 X / \kappa $ nodes in the tree (i.e., we decrease the level of each such message by at least $ 2 X / \kappa $).
To avoid congestion we aggregate insertion messages meeting at the same node by simply counting the \emph{number} of insertions.
Thus, we need to spend an additional $ O (X / \kappa) $ rounds in each recovery stage.
In this way, the first insertion message arrives at the root after $ \kappa/2 $ recovery stages and after $ \kappa $ recovery stages the first $ \kappa/2 $ messages have arrived
Accumulated over $ \kappa $ recovery stages after insertions, the total time for sending the insertion messages is $ O(\kappa X / \kappa) = O(X) $, which we charge to the current phase.
Thus, to get the same approximation guarantee and the same asymptotic running time as in Section~\ref{sec:incremental_distributed_model}, we slightly modify the algorithm to start a new phase as soon as the root has been notified of $ \kappa / 2 $ insertions.

\begin{theorem}\label{thm:incremental_SSSP_distributed_model}
In the distributed model, there is an incremental algorithm for maintaining a $ (1+\epsilon) $-approximate BFS tree under up to $ q $ insertions that has a total update time of
$ O ( q^{2/3} n^{1/3} D^{2/3} / \epsilon^{2/3} ) $,
where $ D $ is the dynamic diameter.
\end{theorem}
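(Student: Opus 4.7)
The plan is to mirror the proof of Theorem~\ref{thm:incremental_SSSP_RAM_model}, but with the sequential BFS cost $T_{\mathrm{BFS}}(X) = O(m)$ replaced by the distributed one $T_{\mathrm{BFS}}(X) = O(X)$, and using the distributed bookkeeping already described in Section~\ref{sec:incremental_distributed_model}. As a boundary case, when $ q $ is so small that $qD = O(q^{2/3} n^{1/3} D^{2/3}/\epsilon^{2/3})$ (i.e., $qD \leq n/\epsilon^2$), I would simply recompute a BFS tree of the current graph after every insertion in time $O(D)$, which is within the target bound and trivially $(1+\epsilon)$-approximate.

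In the main case, I would run $O(\log D)$ parallel instances of Algorithm~\ref{alg:insertion_algorithm}, one for each $X_i = 2^i$ with $i$ ranging from $\lceil \log X^* \rceil$ up to $\lceil \log n \rceil$, where $X^*$ is a threshold (a power of two) chosen so that the cost of an Even--Shiloach tree up to depth $X^*$ (which by Theorem~\ref{thm:Even_Shiloach} is $O(nX^* + q)$) is absorbed by the layered cost. Each instance $i$ uses the parameters $\kappa, \delta$ prescribed by Lemma~\ref{lem:multiplicative_approximation_abstract} with $X := X_i$. As in Theorem~\ref{thm:incremental_SSSP_RAM_model}, starting a new phase at instance $i$ cascades to all instances $j \leq i$, so that a node whose distance falls out of the range $[X_i/2, X_i]$ is immediately caught by a lower range; since the initial graph is connected, every node $v$ with $\dist_G(v, \s) > X^*$ lies in some band $[X_i/2, X_i]$ at the start of instance $i$'s current phase, so Lemma~\ref{lem:multiplicative_approximation_abstract} delivers a $(1+\epsilon)$-approximation for $v$, while the Even--Shiloach tree handles $v$ with $\dist_G(v,\s) \leq X^*$ exactly.

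For the running time I would plug $T_{\mathrm{BFS}}(X_i) = O(X_i)$ into Lemma~\ref{lem:multiplicative_approximation_abstract}, obtaining per-instance cost
\begin{equation*}
O\!\left( X_i \cdot \frac{q^{2/3} n^{1/3}}{\epsilon^{2/3} X_i^{1/3}} + q \right) = O\!\left( \frac{X_i^{2/3} q^{2/3} n^{1/3}}{\epsilon^{2/3}} + q \right).
\end{equation*}
The cascading rule inflates a phase start at level $i$ by at most $O(\log n)$ lower-level rebuilds, each of cost $O(X_j) \leq O(X_i)$, so only an $O(\log n)$ factor is lost. Summed over the geometric chain $X_{\lceil \log X^*\rceil}, \dots, X_{\lceil \log n\rceil}$, the first term is dominated by its largest value $X = O(D)$, giving $O(D^{2/3} q^{2/3} n^{1/3}/\epsilon^{2/3})$ in total; choosing $X^*$ so that $nX^* = O(D^{2/3} q^{2/3} n^{1/3}/\epsilon^{2/3})$ keeps the Even--Shiloach contribution in budget. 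I would then verify the Lemma~\ref{lem:multiplicative_approximation_abstract} preconditions $\gamma^2 q X_i \geq n$ and $\gamma n X_i^2 \geq q$ exactly as in the sequential proof, using that $X_i \geq X^*$. The case of unknown $q$ is handled by the same doubling trick as in Theorem~\ref{thm:incremental_SSSP_RAM_model}, losing only a constant factor in the geometric sum.

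The main obstacle I anticipate is not the arithmetic but confirming that the distributed bookkeeping of Section~\ref{sec:incremental_distributed_model} really fits within this budget. Specifically, I need to argue that (i) the per-insertion constant time for exchanging endpoint distances accumulates only to $O(q)$, and (ii) the two $O(X_i)$-per-phase costs of propagating level-decrease notifications and aggregating insertion-count messages up through the current tree of depth at most $X_i$ can both be charged to the same phase without blowing up the $O(X_i)$ per-phase accounting used by Lemma~\ref{lem:multiplicative_approximation_abstract}. The only subtle point is the $\kappa/2$ delayed trigger: I would argue that, because the root initiates a new phase after only $\kappa/2$ messages arrive while the true count is between $\kappa/2$ and $\kappa$, Lemma~\ref{lem:incremental_algorithm_main_invariant} still applies with $\kappa$ replaced by $\kappa$ and hence the additive error and running time analyses from the abstract framework carry over verbatim.
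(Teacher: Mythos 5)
Your overall plan (layered instances with parameters from Lemma~\ref{lem:multiplicative_approximation_abstract}, a low-depth handler, cascading re-initializations, the doubling trick) is the right shape, but your cascading cost estimate is wrong in a way that matters: you charge $O(\log n)$ lower-level rebuilds to each level-$i$ phase and conclude that ``only an $O(\log n)$ factor is lost,'' then silently drop this factor in the final bound. If cascading really cost $O(X_i \log n)$ per phase, the total would be $O(q^{2/3} n^{1/3} D^{2/3} \log n / \epsilon^{2/3})$, which is not the claimed bound. The key observation you are missing — and the one the paper uses — is that in the distributed model $T_{\mathrm{BFS}}(X_j) = O(X_j)$ is \emph{depth-dependent}, so a cascade at level $i$ costs
\begin{equation*}
\sum_{0 \leq j \leq i} O(X_j) = O\!\left(\sum_{0 \leq j \leq i} 2^j\right) = O(2^{i+1}) = O(X_i),
\end{equation*}
i.e., a geometric sum that adds only a constant factor, not a logarithmic one. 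You have imported the sequential-model accounting (where each rebuild costs $O(m)$ independently of depth, so cascading genuinely costs $O(m\log n)$) into the distributed model, where it overcounts. Without fixing this, your proof establishes a bound that is a $\log n$ factor worse than the theorem.

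A secondary, more minor issue: the paper does not use a single global Even--Shiloach tree up to one threshold $X^*$ as in your proposal and in the sequential theorem; instead it performs a per-layer case analysis (for each $i$, depending on how $q$ compares to $16n/(\epsilon^2 X_i)$ and how $X_i$ compares to $4\sqrt{q}/\sqrt{\epsilon n}$, the layer either rebuilds from scratch per insertion, runs an Even--Shiloach tree, or runs Algorithm~\ref{alg:insertion_algorithm}). Your global-threshold variant is in the same spirit and could probably be made to work, but you have not actually pinned down $X^*$ or verified the preconditions $\gamma^2 q X_i \geq n$ and $\gamma n X_i^2 \geq q$ in terms of your chosen threshold; you assert they hold ``exactly as in the sequential proof,'' but the numerics there use a different $X^*$, so this step needs to be redone. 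Finally, your remark that Lemma~\ref{lem:incremental_algorithm_main_invariant} ``still applies with $\kappa$ replaced by $\kappa$'' under the delayed $\kappa/2$ trigger is garbled; the intended argument is that when the root has received $\kappa/2$ messages, at most $\kappa$ insertions have actually occurred, so the lemma applies with parameter $\kappa$ and the additive error stays $\kappa\delta$.
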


\begin{proof}
Our algorithm consists of $ O (\log{D}) $ layers.
For each $ 0 \leq i \leq \lceil \log{D} \rceil $ we set $ X_i = 2^i $ and do the following:
(1) If $ q \leq 16 n / (\epsilon^2 X_i) $, we recompute a BFS tree up to depth $ X_i $ from scratch after every insertion.
(2) If $ q > 16 n / (\epsilon^2 X_i) $ and $ X_i \leq 4 \sqrt{q} / \sqrt{\epsilon n} $, we maintain an Even-Shiloach tree up to depth $ X_i $.
(3) If $ q > 16 n / (\epsilon^2 X_i) $ and $ X_i > 4 \sqrt{q} / \sqrt{\epsilon n} $ we run an instance of Algorithm~\ref{alg:insertion_algorithm} with parameters $ X_i = 2^i $ and $ \kappa_i $ and $ \delta_i $ as in Lemma~\ref{lem:multiplicative_approximation_abstract}.
We use the following slight modification of Algorithm~\ref{alg:insertion_algorithm}:
every time a new phase starts for instance $ i $, we re-initialize all instances $ j $ of Algorithm~\ref{alg:insertion_algorithm} such that $ j \leq i $ by computing a BFS tree up to depth $ X_j $.
Note that if $ D $ is not known in advance, our algorithm can simply increase the number of layers until the BFS tree computed at the initialization of the current layer contains all nodes of the graph.

We first argue that this algorithm provides a $ (1+\epsilon) $-approximation.
The algorithm maintains the exact distances for all nodes that are in distance at most $ 16 n / (\epsilon^2 q) $ or $ 4 \sqrt{q} / \sqrt{\epsilon n} $ from the root as in these cases the distances are obtained by recomputing the BFS tree from scratch or by the Even-Shiloach tree.
For all other nodes we have to argue that our multi-layer version of Algorithm~\ref{alg:insertion_algorithm} provides a $ (1+\epsilon) $-approximation.
Note that for each instance $ i $ the result of Lemma~\ref{lem:multiplicative_approximation_abstract} only applies if $ \gamma^2 q X_i \geq n $ and $ \gamma n X_i^2 \geq q $.
These two inequalities hold because $ q $ and $ X_i $ are large enough:
\begin{gather*}
\gamma^2 q X_i = \epsilon^2 q X_i / 16 \geq \epsilon^2 (16 n / (\epsilon^2 X_i)) X_i / 16 = n \\
\gamma n X_i^2 = \epsilon n X_i^2 / 4 \geq  \epsilon n (4 \sqrt{q} / \sqrt{\epsilon n})^2 / 4 = q \, .
\end{gather*}
In each instance $ i $, the approximation guarantee of Lemma~\ref{lem:multiplicative_approximation_abstract} holds for all nodes whose distance to the root was between $ X_i / 2 $ and $ X_i $ since the last initialization of instance~$ i $.
Every time we re-initialize instance $ i $, some nodes that before were in the range $ [X_i/2, X_i] $ might now have a smaller distance and will thus not be ``covered'' by instance $ i $ anymore.
By re-initializing all instances $ j \leq i $ as well we guarantee that such nodes will immediately be ``covered'' by some other instance of the algorithm (or by the exact BFS tree we maintain for small depths).
Since the graph is connected, we thus have the following property:
for every node $ v $ with distance more than $ 4 \sqrt{q} / \sqrt{\epsilon n} $ to the root there is an index $ i $ such that at the beginning of the current phase of instance~$ i $ the distance from $ v $ to the root was between $ X_i / 2 $ and $ X_i $.
By Lemma~\ref{lem:multiplicative_approximation_abstract} this previous distance is a $ (1+\epsilon) $-approximation of the current distance.

We will now bound the running time.
We will argue that the running time in every layer $ i $ is $ O ( q^{2/3} n^{1/3} X_i^{2/3} / \epsilon^{2/3} ) $.
If the number of insertions is at most $ q \leq 16 n / (\epsilon^2 X_i) $, then computing a BFS tree from scratch up to depth $ X_i $ after very insertion takes time $ O (q X_i) $ in total, which we can bound as follows:
\begin{equation*}
q X_i = q^{2/3} q^{1/3} X_i = \frac{q^{2/3} n^{1/3} X_i^{2/3}}{\epsilon^{2/3}} \, .
\end{equation*}
By Theorem~\ref{thm:Even_Shiloach} maintaining an Even-Shiloach tree up to depth $ X_i \leq 4 \sqrt{q} / \sqrt{\epsilon n} $ takes time $ O (n X_i) = O (\sqrt{q n} / \sqrt{\epsilon}) $.
Since we only do this in the case $ q > 16 n / (\epsilon^2 X_i) $, we can use the inequality
\begin{equation*}
n < \frac{\epsilon^2 q X_i}{16} \leq \frac{q X_i^4}{\epsilon}
\end{equation*}
to obtain
\begin{equation*}
n X_i = \frac{\sqrt{q n}}{\sqrt{\epsilon}} = \frac{n^{1/3} n^{1/6} \sqrt{q}}{\sqrt{\epsilon}} \leq \frac{n^{1/3} q^{1/6} X_i^{4/6} \sqrt{q}}{\sqrt{\epsilon} \cdot \epsilon^{1/6}} = \frac{q^{2/3} n^{1/3} X_i^{2/3}}{\epsilon^{2/3}} \, .
\end{equation*}
Finally we bound the running time of our slight modification of Algorithm~\ref{alg:insertion_algorithm} in layer~$ i $.
Every time we start a new phase in layer $ i $, we re-initialize the instances of Algorithm~\ref{alg:insertion_algorithm} in all layers $ j \leq i $.
The re-initialization takes in each layer $ j $ takes time $ O (X_j) $ as we have to compute a BFS tree up to depth $ X_j $.
Thus, the cost of starting a new phase in layer $ i $ is proportional to
\begin{equation*}
\sum_{0 \leq j \leq i} X_j = \sum_{0 \leq j \leq i} 2^j \leq 2^{i+1} = 2 X_i
\end{equation*}
which asymptotically is the same as only the time needed for computing a BFS tree up to depth~$ X_i $.
Thus, by Lemma~\ref{lem:multiplicative_approximation_abstract} the running time of instance~$ i $ of Algorithm~\ref{alg:insertion_algorithm} is $ O (q^{2/3} n^{1/3} X_i^{2/3} / \epsilon^{2/3} + q) $.
Since $ q \leq \epsilon n X_i^2 / 4 $ as argued above we have $ q \leq n X_i^2 $ and thus
\begin{equation*}
\frac{q^{2/3} n^{1/3} X_i^{2/3}}{\epsilon^{2/3}} \geq q^{2/3} n^{1/3} X_i^{2/3} = q^{2/3} (n X_i^2)^{1/3} \geq q^{2/3} q^{1/3} = q \, .
\end{equation*}
It follows that the running time of instance~$ i $ is $ O (q^{2/3} n^{1/3} X_i^{2/3}/ \epsilon^{2/3}) $ and the total running time over all layers is
\begin{equation*}
O \left( \sum_{0 \leq i \leq \lceil \log{D} \rceil} \frac{q^{2/3} n^{1/3} X_i^{2/3}}{\epsilon^{2/3}} \right) = O \left( \sum_{0 \leq i \leq \lceil \log{D} \rceil} \frac{q^{2/3} n^{1/3} (2^i)^{2/3}}{\epsilon^{2/3}} \right) = O \left( \frac{q^{2/3} n^{1/3} D^{2/3}}{\epsilon^{2/3}} \right) \, .
\end{equation*}

By using a doubling approach for guessing the value of $ q $ we can run the algorithm with the same asymptotic running time without knowing the number of insertions beforehand.
\end{proof}

\subsection{Removing the Connectedness Assumption}\label{sec:removing_connectedness_assumption_incremental}

The algorithm above assumes that the graph is connected.
We now explain how to adapt the algorithm to handle graphs where this is not the case.

Note that an insertion might connect one or several nodes to the root node.
For each newly connected node, every path to the root goes through an edge that has just been inserted.
In such a situation we extend the tree maintained by the Algorithm~\ref{alg:insertion_algorithm} by performing a breadth-first search among the newly connected nodes.
Using this modified tree, the argument of Lemma~\ref{lem:incremental_algorithm_main_invariant} to prove the additive approximation guarantee still goes through.
Note that each node can become connected to the root node at most once.
Thus, we can amortize the cost of the breadth-first searches performed to extend the tree over all insertions.

This results in the following modification of the running time of Lemma~\ref{lem:running_time_incremental_abstract}: In the sequential model we have an additional cost of $ O (m) $ as each edge has to be explored at most once in one of the breadth-first searches.
In the distributed model we have an additional cost of $ O (n) $ as every node is explored at most once in one of the breadth-first searches.
The total update time of the $ (1 + \epsilon) $-approximation in the sequential model (Theorem~\ref{thm:incremental_SSSP_RAM_model}) clearly stays unaffected from this modification as we anyway have to consider the cost of $ O (m) $ for computing a BFS tree.
In the distributed model the argument is as follows.
In the proof of Theorem~\ref{thm:incremental_SSSP_distributed_model} we bound the running time of each instance $ i $ of Algorithm~\ref{alg:insertion_algorithm} by $ O (q^{2/3} n^{1/3} X_i^{2/3} / \epsilon^{2/3}) $.
Since $ q $ and~$ X_i $ satisfy the inequality $ q > 16 n / (\epsilon^2 X_i) \geq n / X_i $, we have $ q^{2/3} n^{1/3} X_i^{2/3} / \epsilon^{2/3} \geq q^{2/3} n^{1/3} X_i^{2/3} \geq n $.
Thus the additional $ O (n) $ is already dominated by $ O (q^{2/3} n^{1/3} X_i^{2/3} / \epsilon^{2/3}) $ and the total update time stays the same as stated in Theorem~\ref{thm:incremental_SSSP_distributed_model}.
Note that if the number of nodes $ n $ is not known in advance because of the graph not being connected, we can use a doubling approach to guess the right range of $ n $.

\section{Decremental Algorithm}\label{sec:decremental_algorithm}

In the decremental setting we use an algorithm of the same flavor as in the incremental setting (see Algorithm~\ref{alg:decremental_algorithm}).
However, the update procedure is more complicated because it is not obvious how to repair the tree after a deletion.
Our solution exploits the fact that in the distributed model it is relatively cheap to examine the local neighborhood of a node.
As in the incremental setting, the algorithm has the parameters $ \kappa $, $ \delta $, and $ X $.

\begin{algorithm}
\caption{Decremental algorithm}
\label{alg:decremental_algorithm}

\SetKwProg{procedure}{Procedure}{}{}
\SetKwFunction{initialize}{Initialize}
\SetKwFunction{delete}{Delete}
\SetKwFunction{repairTree}{RepairTree}

\tcp{At any time, $ T_0 $ is the BFS tree computed at the beginning of the current phase, $ F' $ is the forest resulting from removing all deleted edges from~$ T_0 $, and $ T $ is the current approximate BFS tree}

\BlankLine

\procedure{\delete{$u$, $v$}}{
	$ k \gets k+1 $\;
	\eIf{$ k = \kappa $}{
		\initialize{}\;
	}{
		Remove edge $ (u, v) $ from $ F' $\;
		\repairTree{}\;
		\lIf{\repairTree{} reports distance increase by at least $ \delta $}{\initialize{}}
	}
}

\BlankLine

\procedure(\tcp*[f]{Start new phase}){\initialize{}}{
	$ k \gets 0 $\;
	Compute BFS tree $ T_0 $ of depth $ X $ rooted at $ \s $\; 
	Compute current distances $ \dist_{G_0} (\cdot, \s) $\;
	$ T \gets T_0 $\;
	$ F' \gets T_0 $\;
}

\BlankLine

\procedure{\repairTree{}}{
	$ F \gets F' $\;
	$ U \gets \{ u \in V \mid \text{$ u $ has no parent in $ F $ and $ u \neq s $} \} $\;
	\ForEach(\tcp*[f]{Search process}){$ u \in U $}{
		Perform breadth-first search from $ u $ up to depth $ \delta $ and try to find a node $ v $ such that (1) $ \dist_{G_0} (v, \s) < \dist_{G_0} (u, \s) $ and (2) $ \dist_{G} (u, v) \leq \delta $\;\label{lin:conditions_good_node}
		\eIf{such a node $ v $ could be found}{
			Add edge $ (u, v) $ of weight $ \dist_{F}^\mathrm{w} (u, v) = \dist_{G} (u, v) $ to $ F $\;
		}{
			\Return{``distance increase by at least $ \delta $''}\;
		}
	}
	$ T \gets F $\;
}
\end{algorithm}

The Procedure \repairTree of Algorithm~\ref{alg:decremental_algorithm} either computes a (weighted) tree $ T $ that approximates the true distances with an additive error of $ \kappa \delta $, or it reports a distance increase by at least $ \delta $ since the beginning of the current phase.
Let $ T_0 $ denote the BFS tree computed at the beginning of the current phase, let $ F' $ be the forest resulting from removing those edges from $ T_0 $ that have already been deleted in the current phase, and the let $ U $ be the set of nodes (except for~$ \s $) that have no parent in $ F' $.
After every deletion, the Procedure \repairTree tries to construct a tree $ T $ by starting with the forest~$ F' $.
Every node $ u \in U $ tries to find a ``good'' node~$ v $ to reconnect to and if successful will use $ v $ as its new parent with a weighted edge $ (u, v) $ (whose weight corresponds to the current distance between $ u $ and $ v $).
Algorithm~\ref{alg:decremental_algorithm} imposes two conditions (Line~\ref{lin:conditions_good_node}) on a ``good'' node $ v $.
Condition~(1) avoids that the reconnection introduces any cycles and Condition~(2) guarantees that the error introduced by each reconnection is at most~$ \delta $ and that a suitable node $ v $ can be found in distance at most $ \delta $ to $ u $.
As $ \delta $ is relatively small, this is the key to efficiently finding such a node.
In the following, we denote the distance between two nodes $ x $ and $ y $ in a graph $ F $ with weighted edges by $ \dist_{F}^\mathrm{w} (x, y) $.
Note that here we have formulated the algorithm in a way such that the Procedure \repairTree always starts with a forest $ F' $ that is the result of removing all edges from~$ T_0 $ that have been deleted so far in the current phase, regardless of trees previously computed by the Procedure \repairTree.

\subsection{Analysis of Procedure for Repairing the Tree}

In the following we first analyze only the Procedure \repairTree.
Its guarantees can be summarized as follows.
\begin{lemma}\label{lem:approximation}
The Procedure \repairTree of Algorithm~\ref{alg:decremental_algorithm} either reports ``distance increase by at least $ \delta $'' and guarantees that there is a node $ x $ with $ \dist_{G_0} (x, \s) \leq X $ such that
\begin{equation*}
\dist_{G} (x, \s) \geq \dist_{G_0} (x, \s) + \delta \, ,
\end{equation*}
or it returns a tree $ T $ such that for every node $ x $ with $ \dist_{G_0} (x, \s) \leq X $ we have
\begin{equation*}
\dist_{G_0} (x, \s) \leq \dist_{G} (x, \s) \leq \dist_{T}^\mathrm{w} (x, \s) \leq \dist_{G_0} (x, \s) + \kappa \delta \, .
\end{equation*}
It runs in time $ O (\kappa \delta) $ after every deletion.
\end{lemma}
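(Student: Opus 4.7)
The plan is to establish the three parts of the lemma by reasoning about the structure of the tree $ T $ that \textsc{RepairTree} produces, exploiting the monotonicity of $ \dist_{G_0}(\cdot, \s) $ along outgoing edges. Fix the graph $ G_0 $ at the start of the phase and the current graph $ G $; since the setting is decremental, $ G $ is obtained from $ G_0 $ by deletions and hence $ \dist_{G_0}(\cdot, \cdot) \leq \dist_G(\cdot, \cdot) $. The forest $ F_0 $ is $ T_0 $ with its deleted edges removed, so every node of $ U $ is an orphan created by a deletion of a $ T_0 $-edge; in particular $ |U| \leq \kappa $.

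For the distance-increase case, suppose some orphan $ u \in U $ exits its BFS without finding a good $ v $. Then every node $ v $ at $ G $-distance at most $ \delta $ from $ u $ satisfies $ \dist_{G_0}(v, \s) \geq \dist_{G_0}(u, \s) =: d $. Since $ u \neq \s $ forces $ d \geq 1 $, the root $ \s $ lies outside this ball, so $ \dist_G(u, \s) > \delta $. Taking the node $ w $ at $ G $-distance exactly $ \delta $ from $ u $ on a shortest $ u $-to-$ \s $ path in $ G $ yields
\begin{equation*}
\dist_G(u, \s) = \delta + \dist_G(w, \s) \geq \delta + \dist_{G_0}(w, \s) \geq \delta + d \, ,
\end{equation*}
so I take $ x := u $, which lies in $ T_0 $ and therefore satisfies $ \dist_{G_0}(x, \s) \leq X $.

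For the tree case, I first check that $ T $ really is a tree: reconnection edges strictly decrease $ \dist_{G_0}(\cdot, \s) $ by condition~(1), and tree edges of $ F_0 $ do so by exactly~$ 1 $ because $ T_0 $ is a BFS tree, so following outgoing edges from any node terminates at $ \s $ without cycling. The lower bound $ \dist_{G_0}(x, \s) \leq \dist_G(x, \s) \leq \dist_T^{\mathrm{w}}(x, \s) $ follows because every tree edge of $ F_0 $ lies in $ G $ and each reconnection edge of weight $ \dist_G(u, v) $ corresponds to a genuine $ u $-to-$ v $ walk in $ G $. For the upper bound, I decompose the unique $ T $-path from $ x $ to $ \s $ into $ t $ tree edges (each of weight $ 1 $ and $ \dist_{G_0} $-drop $ 1 $) and $ r $ reconnection edges (each of weight $ \leq \delta $ and $ \dist_{G_0} $-drop $ \geq 1 $). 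Telescoping the drops gives $ t + r \leq \dist_{G_0}(x, \s) $, and since the total number of reconnection edges in $ T $ is at most $ |U| \leq \kappa $, we have $ r \leq \kappa $, so
\begin{equation*}
\dist_T^{\mathrm{w}}(x, \s) \leq t + r \delta = (t + r) + r(\delta - 1) \leq \dist_{G_0}(x, \s) + \kappa \delta \, .
\end{equation*}

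The running time follows because each invocation of \textsc{RepairTree} performs at most $ |U| \leq \kappa $ BFS explorations, each to depth $ \delta $, costing $ O(\delta) $ rounds in the distributed model. The main obstacle is the dual accounting in the tree case: a single reconnection edge contributes up to $ \delta $ to the weight but only $ \geq 1 $ to the $ \dist_{G_0} $-drop, so the telescoping bound on $ t + r $ alone cannot control the weight; combining it with the global count $ r \leq \kappa $ is what converts the worst-case per-edge slack $ \delta - 1 $ into the advertised additive error $ \kappa \delta $. The strict monotonicity supplied by condition~(1) is essential to keep these two bookkeeping quantities compatible.
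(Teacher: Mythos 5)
Your proof is correct and follows essentially the same approach as the paper: prove $T$ is a tree via the strictly decreasing $\dist_{G_0}$-labels, argue the distance-increase case by picking the node at $G$-distance $\delta$ on a shortest path (the paper states this contrapositively as the existence of a good $v$), bound the additive error by accounting for at most $\kappa$ artificial edges each contributing at most $\delta$, and charge $O(\delta)$ per orphan for the running time. The only cosmetic difference is in the additive-error step, where you telescope the $\dist_{G_0}$-drops to get $t+r \le \dist_{G_0}(x,\s)$ and then add the slack $r(\delta-1)$, whereas the paper carries the invariant $\dist_T^{\mathrm{w}}(x_j,\s) \le \dist_{G_0}(x_j,\s)+S_j\delta$ by induction along the $T$-path; both hinge on the same two facts, that artificial edges drop $\dist_{G_0}$ by at least one and have weight at most $\delta$.
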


We first observe that if the Procedure \repairTree returns a graph, this graph is actually a tree.
The input of the procedure is the forest $ F' $ obtained from removing some edges from the BFS tree $ T_0 $.
In this forest we have $ \dist_{G_0} (v, \s) = \dist_{G_0} (u, \s) - 1 $ for every child $ u $ and parent $ v $.
In the Procedure \repairTree, we add, for every node $ u $ that is missing a parent, an edge to a parent $ v $ such that $ \dist_{G_0} (v, \s) < \dist_{G_0} (u, \s) $.
Thus, the decreasing label $ \dist_{G_0} (v, \s) $ for every node $ v $ guarantees that $ T $ is a tree.
\begin{lemma}\label{pro:efficient_algorithm_forest}
The graph $ T $ computed by the Procedure \repairTree is a tree.
\end{lemma}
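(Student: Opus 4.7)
The plan is to verify two structural properties of the graph $T$ returned by \repairTree: (i) every node other than $ \s $ has exactly one outgoing ``parent'' edge, and (ii) there are no directed cycles along parent edges. Together with finiteness, these yield that $T$ is a tree rooted at $ \s $.

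First I would observe that the starting object $F_0$ is a forest: it is obtained by removing edges from the BFS tree $T_0$ of $G_0$. In $F_0$, every node either has an outgoing edge toward its parent (inherited from $T_0$), or it belongs to $U$ because it lost its parent via some deletion in the current phase. The root $ \s $ is the unique node that had no outgoing edge in $T_0$ to begin with and, by construction, is excluded from $U$. The procedure then adds, for each $u \in U$, exactly one outgoing edge $(u, v)$ into $F$. Hence in the final graph $F = T$, every node distinct from $ \s $ has exactly one outgoing edge, giving property (i).

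For (ii), I would use the potential $\phi(x) = \dist_{G_0}(x, \s)$ and show that $\phi$ strictly decreases along every outgoing edge of $T$. For edges inherited from $F_0 \subseteq T_0$ this is immediate, since in a BFS tree the parent's distance-from-root is exactly one less than the child's. For edges $(u, v)$ added inside \repairTree, this is precisely Condition~(1) of Line~\ref{lin:conditions_good_node}, namely $\dist_{G_0}(v, \s) < \dist_{G_0}(u, \s)$. Therefore any directed path following outgoing edges has strictly decreasing $\phi$ and cannot revisit a node, ruling out cycles; since the $\phi$-values are bounded below by $0$, such a path must terminate, and the only possible terminus is $ \s $ (the unique node with no outgoing edge).

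Combining (i) and (ii), the underlying undirected graph of $T$ has $|V| - 1$ edges (one per non-root node) and is connected (every node can reach $ \s $ by following outgoing edges), so it is a tree. I do not anticipate a real obstacle here; the only subtle point is to make sure that the argument handles both edge types uniformly via the $\phi$-potential and to note that Condition~(1) is exactly what forbids a newly added edge from creating a cycle with previously established parent edges.
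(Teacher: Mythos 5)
Your proposal is correct and follows essentially the same route as the paper: the paper also observes that $\dist_{G_0}(\cdot,\s)$ strictly decreases from child to parent both along inherited $T_0$-edges and along newly added edges (via Condition~(1)), and concludes acyclicity from this decreasing label, together with the fact that each node in $U$ receives exactly one new parent. Your write-up just makes the potential argument and the edge count slightly more explicit.
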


We will show next that the Procedure $ \repairTree $ is either successful, i.e., every node in $ U $ finds a new parent, or the algorithm makes progress because there is some node whose distance to the root has increased significantly.
\begin{lemma}\label{lem:efficient_algorithm_existence_of_good_node}
For every node $ u \in U $, if $ \dist_{G} (u, \s) < \dist_{G_0} (u, \s) + \delta $, then there is a node $ v \in V $ such that
\begin{enumerate}[label=(\arabic*)]
\item $ \dist_{G_0} (v, \s) < \dist_{G_0} (u, \s) $
and
\item $ \dist_{G} (u, v) \leq \delta $.
\end{enumerate}
\end{lemma}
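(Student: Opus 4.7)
The plan is to construct $v$ explicitly as an appropriate ancestor of $u$ along a current shortest path to $\s$, and to verify both conditions by combining the hypothesis $\dist_G(u,\s) < \dist_{G_0}(u,\s) + \delta$ with the basic monotonicity fact that, in the decremental setting, $G$ is obtained from $G_0$ by deleting edges, so $\dist_{G_0}(x,\s) \leq \dist_G(x,\s)$ for every node $x$.

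Concretely, I would fix a shortest path $ u = x_0, x_1, \ldots, x_L = \s $ from $u$ to $\s$ in the current graph $G$, where $L = \dist_G(u,\s)$. Along this path one has $\dist_G(x_i,\s) = L - i$ for each $i$. I would then set $v := x_{\min(\delta,\, L)}$, i.e., I walk $\delta$ steps toward $\s$, or stop at $\s$ if the path is shorter than $\delta$. Condition~(2) is immediate from the construction since $v$ lies at distance $\min(\delta, L) \leq \delta$ from $u$ on a shortest path of $G$.

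For condition~(1) I split into two cases. If $L \leq \delta$, then $v = \s$, and since $u \neq \s$ (as $u \in U$) we have $\dist_{G_0}(v,\s) = 0 < \dist_{G_0}(u,\s)$. Otherwise $L > \delta$ and $v = x_\delta$, so using monotonicity and then the hypothesis,
\begin{equation*}
\dist_{G_0}(v,\s) \;\leq\; \dist_G(v,\s) \;=\; L - \delta \;=\; \dist_G(u,\s) - \delta \;<\; \dist_{G_0}(u,\s),
\end{equation*}
which gives condition~(1).

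There is essentially no hard step here; the only delicate point is making sure the corner case $L \leq \delta$ (where the path reaches $\s$ before we have taken $\delta$ steps) is handled, which is why I set $v := x_{\min(\delta, L)}$ rather than insisting on $x_\delta$. One should also note that the hypothesis $\dist_G(u,\s) < \dist_{G_0}(u,\s) + \delta$ implicitly guarantees $\dist_G(u,\s)$ is finite, so the shortest path in $G$ from $u$ to $\s$ exists and the construction is well-defined.
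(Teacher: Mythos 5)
Your proof is correct and takes essentially the same approach as the paper: the same case split on whether $\dist_G(u,\s)\leq\delta$, the same choice of $v$ as the node $\delta$ steps along a current shortest path (or $\s$ itself), and the same chain of inequalities using decremental monotonicity $\dist_{G_0}(v,\s)\leq\dist_G(v,\s)$ together with the hypothesis. The only cosmetic difference is that you unify the two cases via $v := x_{\min(\delta,L)}$ before splitting, which the paper does not bother to do.
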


\begin{proof}
If $ \dist_G (u, \s) \leq \delta $, then set $ v = \s $.
As $ \dist_{G_0} (\s, \s) = 0 $ and $ u \neq \s $, this satisfies both conditions.

If $ \dist_G (u, \s) > \delta $, then consider the shortest path from $ u $ to $ \s $ in $ G $ and define $ v $ as the node that is in distance $ \delta $ from $ u $ on this path, i.e., such that $ \dist_G (v, \s) = \dist_G (u, \s) - \delta $.
We then have
\begin{equation*}
\dist_{G_0} (v, \s) \leq \dist_G (v, \s) = \dist_G (u, \s) - \delta < \dist_{G_0} (u, \s) + \delta - \delta = \dist_{G_0} (u, \s) \, . \qedhere
\end{equation*}
\end{proof}

Note that in the proof above we know exactly which node $ v $ we can pick for every node $ u \in U $.
In the algorithm however the node $ u $ does not know its shortest path to $ \s $ in the current graph and thus it would be expensive to specifically search for the node~$ v $ on the shortest path from $ u $ to $ \s $ defined above.
However, we know that $ v $ is contained in the local search performed by $ u $.
Therefore $ u $ either finds $ v $ or some other node that fulfills Conditions~(1) and~(2).

We now show that every reconnection made by the Procedure \repairTree adds an additive error of $ \delta $, which sums up to $ \kappa \delta $ for at most $ \kappa $ reconnections (one per previous edge deletion).
\begin{lemma}\label{lem:approximation_forest}
For the tree $ T $ computed by the Procedure \repairTree and every node~$ x $ such that $ \dist_{G_0} (x, \s) \leq X $ we have
\begin{equation*}
\dist_{G} (x, \s) \leq \dist_T^\mathrm{w} (x, \s) \leq \dist_{G_0} (x, \s) + \kappa \delta \, .
\end{equation*}
\end{lemma}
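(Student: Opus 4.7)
The plan is to establish the two inequalities separately, exploiting the structure of $T$ as the union of the surviving forest $F_0 \subseteq T_0$ (with unit-weight edges) and at most $\kappa$ added weighted edges.

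First I would dispense with the lower bound $\dist_G(x,\s) \le \dist_T^{\mathrm{w}}(x,\s)$. Since the graph is decremental we already have $\dist_{G_0}(x,\s) \le \dist_G(x,\s)$, so the first inequality in the statement is free. For the bound $\dist_G \le \dist_T^{\mathrm{w}}$, I would observe that every edge of $T$ represents an actual walk in $G$ of the same total length: each $F_0$-edge survived in $G$ and has weight $1$, while each added edge $(u,v)$ is explicitly weighted by $\dist_G(u,v)$, which is realized by a walk of that length in $G$. Concatenating these walks along the unique $x$-to-$\s$ path in $T$ (unique by Lemma~\ref{pro:efficient_algorithm_forest}) yields a walk in $G$ of length $\dist_T^{\mathrm{w}}(x,\s)$, proving the inequality.

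For the upper bound $\dist_T^{\mathrm{w}}(x,\s) \le \dist_{G_0}(x,\s) + \kappa\delta$, I would walk along the unique path $x = x_0, x_1, \ldots, x_l = \s$ in $T$ and classify each edge as either an $F_0$-edge or an added edge. Every $F_0$-edge goes from a child to its $T_0$-parent, so it has weight $1$ and decreases $\dist_{G_0}(\cdot,\s)$ by exactly~$1$. By Condition~(1) in Line~\ref{lin:conditions_good_node}, every added edge $(x_i, x_{i+1})$ strictly decreases $\dist_{G_0}(\cdot,\s)$, so its drop $\dist_{G_0}(x_i,\s) - \dist_{G_0}(x_{i+1},\s)$ is at least~$1$, while by Condition~(2) its weight is at most $\delta$. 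Writing the total weight of the path as $\sum_i \bigl(\dist_{G_0}(x_i,\s) - \dist_{G_0}(x_{i+1},\s)\bigr) + \sum_{\text{added}} \bigl(\text{weight} - \text{drop}\bigr)$, the telescoping first sum collapses to $\dist_{G_0}(x,\s)$, and each added edge contributes an excess of at most $\delta - 1 \le \delta$.

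It then remains to bound the number of added edges on the path by $\kappa$. Each added edge emanates from a distinct node $u \in U$, and $U$ consists exactly of the endpoints (other than $\s$) that lost their parent edge in $T_0$ through one of the deletions in the current phase. Since a phase contains at most $\kappa$ deletions, $|U| \le \kappa$, and hence the $x$-to-$\s$ path carries at most $\kappa$ added edges, giving $\dist_T^{\mathrm{w}}(x,\s) \le \dist_{G_0}(x,\s) + \kappa\delta$. The main thing to be careful about is ensuring that each node in $U$ contributes at most one added edge to the path (immediate from $T$ being a tree) and that added edges are charged only to genuine deletions; everything else is bookkeeping along the tree path.
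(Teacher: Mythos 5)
Your proof is correct and follows the same underlying strategy as the paper's: both walk the unique $x$-to-$\s$ path in $T$, split it into $T_0$-edges (weight $1$, $G_0$-level drop exactly $1$) and artificial edges (weight $\leq\delta$ by Condition~(2), $G_0$-level drop $\geq1$ by Condition~(1)), and use that at most $\kappa$ artificial edges can lie on the path because each one stems from a distinct node in $U$ that was orphaned by one of the at most $\kappa$ deletions in the phase. The paper packages this accounting as an induction with the invariant $\dist_T^\mathrm{w}(x_j,\s)\leq\dist_{G_0}(x_j,\s)+S_j\delta$, where $S_j$ counts artificial edges on the subpath to $\s$; you instead exhibit the telescoping identity $\dist_T^\mathrm{w}(x,\s)=\dist_{G_0}(x,\s)+\sum_{\text{added}}(\text{weight}-\text{drop})$ directly, which is a slightly cleaner way of seeing the same bound but not a different argument in substance. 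One small slip: your opening sentence declares the ``first inequality in the statement'' free via $\dist_{G_0}(x,\s)\leq\dist_G(x,\s)$; that observation is true for decremental graphs but is not part of this lemma (it belongs to the enclosing Lemma~\ref{lem:approximation}), and the actual lower bound $\dist_G(x,\s)\leq\dist_T^\mathrm{w}(x,\s)$ is the one you then correctly establish by noting that every $T$-edge is realized by a $G$-walk of the same length.
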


\begin{proof}
We call the weighted edges inserted by the Procedure \repairTree \emph{artificial edges}.
In the tree $ T $ there are two types of edges: those that were already present in the BFS tree $ {T_0} $ from the beginning of the current phase and artificial edges added in the Procedure \repairTree.

First, we prove the inequality $ \dist_{G} (x, \s) \leq \dist_{T}^\mathrm{w} (x, \s) $.
Consider the unique path from $ x $ to $ \s $ in the tree $ T $ consisting of the nodes $ x = x_l, x_{l-1}, \ldots x_0 = s $.
We know that every edge $ (x_{j+1}, x_j) $ in $ T $ either was part of the initial BFS tree $ T_0 $, which means that $ \dist_{T}^\mathrm{w} (x_{j+1}, x_j) = 1 = \dist_G (x_{j+1}, x_j) $, or was inserted later by the algorithm, which means that $ \dist_{T}^\mathrm{w} (x_{j+1}, x_j) = \dist_G (x_{j+1}, x_j) $.
This means that in any case we have $ \dist_{T}^\mathrm{w} (x_{j+1}, x_j) = \dist_G (x_{j+1}, x_j) $ and therefore we get
\begin{equation*}
\dist_{T}^\mathrm{w} (x, \s) = \sum_{j=0}^{l-1} \dist_{T}^\mathrm{w} (x_{j+1}, x_j)
= \sum_{j=0}^{l-1} \dist_{G} (x_{j+1}, x_j) \geq \dist_{G} (x, \s) \, .
\end{equation*}

Second, we prove the inequality $ \dist_T^\mathrm{w} (x, \s) \leq \dist_{G_0} (x, \s) + \kappa \delta $.
Consider the shortest path $ \pi = x_l, x_{l-1}, \ldots, x_0 $ from $ x $ to $ \s $ in $ T $, where $ x_l = x $ and $ x_0 = \s $.
Let $ S_j $ (with $ 0 \leq j \leq l $) denote the number of artificial edges on the subpath $ x_j, x_{j-1}, \ldots x_0 $.
For each edge deleted so far, $ \pi $ contains at most one artificial edge.
Therefore we have $ S_j \leq \kappa $ for all $ 0 \leq j \leq l $.
Now consider the following claim.
\begin{claim}
For every $ 0 \leq j \leq l $ we have $ \dist_T^\mathrm{w} (x_j, \s) \leq \dist_{G_0} (x_j, \s) + S_j \delta $.
\end{claim}
Assuming the truth of the claim, the desired inequality follows straightforwardly since $ x_l = x $, $ x_0 = \s $, and $ S_l \leq \kappa $.

In the following we prove the claim by induction on $ j $.
In the induction base we have $ j = 0 $ and thus $ x_j = \s $ and $ S_j = 0 $.
The inequality then trivially holds due to $ \dist_T^\mathrm{w} (\s, \s) = 0 $.
We now prove the inductive step from $ j $ to $ j + 1 $.
Note that $ S_j \leq S_{j+1} \leq S_{j} + 1 $ since the path is exactly one edge longer.
Consider first the case that $ (x_{j+1}, x_j) $ is an edge from the BFS tree $ {T_0} $ of the graph $ G_0 $.
In that case we have $ \dist_T^\mathrm{w} (x_{j+1}, x_j) = \dist_{G_0} (x_{j+1}, x_j) = 1 $.
Furthermore, since $ (x_{j+1}, x_j) $ is an edge in the BFS tree $ {T_0} $ we know that $ x_j $ lies on a shortest path from $ x_{j+1} $ to $ \s $ in $ G_0 $.
Therefore we have $ \dist_{G_0} (x_{j+1}, \s) = \dist_{G_0} (x_{j+1}, x_j) + \dist_{G_0} (x_j, \s) $.
Together with the induction hypothesis we get:
\begin{align*}
\dist_T^\mathrm{w} (x_{j+1}, \s)
&= \underbrace{\dist_T^\mathrm{w} (x_{j+1}, x_j)}_{= \dist_{G_0} (x_{j+1}, x_j)} + \underbrace{\dist_T^\mathrm{w} (x_j, \s)}_{\leq \dist_{G_0} (x_j, \s) + S_j \cdot \delta \text{ (by IH)}} \\
&\leq \underbrace{\dist_{G_0} (x_{j+1}, x_j) + \dist_{G_0} (x_j, \s)}_{= \dist_{G_0} (x_{j+1}, \s)} + \underbrace{S_j}_{= S_{j+1}} \cdot \delta \\
&= \dist_{G_0} (x_{j+1}, \s) + S_{j+1} \cdot \delta \, .
\end{align*}

The second case is that $ (x_{j+1}, x_j) $ is an artificial edge.
In that case we have $ \dist_T^\mathrm{w} (x_{j+1}, x_j) = \dist_{G} (x_{j+1}, x_j) $ and by the algorithm the inequality $ \dist_{G} (x_{j+1}, x_j) + \dist_{G_0} (x_j, \s) \leq \dist_{G_0} (x_{j+1}, \s) + \delta $ holds.
Note also that $ S_{j+1} = S_j + 1 $.
We therefore get the following:
\begin{align*}
\dist_T^\mathrm{w} (x_{j+1}, \s)
&= \underbrace{\dist_T^\mathrm{w} (x_{j+1}, x_j)}_{= \dist_{G} (x_{j+1}, x_j)} + \underbrace{\dist_T^\mathrm{w} (x_j, \s)}_{\leq \dist_{G_0} (x_j, \s) + S_j \cdot \delta \text{ (by IH)}} \\
&\leq \underbrace{\dist_{G} (x_{j+1}, x_j) + \dist_{G_0} (x_j, \s)}_{\leq \dist_{G_0} (x_{j+1}, \s) + \delta} + S_j \cdot \delta \\
&= \dist_{G_0} (x_{j+1}, \s) + \underbrace{(S_j + 1)}_{= S_{j+1}} \cdot \delta \\
&= \dist_{G_0} (x_{j+1}, \s) + S_{j+1} \cdot \delta \, . \qedhere
\end{align*}
\end{proof}

\begin{remark}
In the proof of Lemma~\ref{lem:approximation_forest} we need the property that after up to $ \kappa $ edge deletions there are at most $ \kappa $ ``artificial'' edges on the shortest path to the root in~$ T $.
This also holds if we allow deleting nodes (together with their set of incident edges).
Thus, we can easily modify our algorithm to deal with node deletions with the same approximation guarantee and asymptotic running time.
\end{remark}

To finish the proof of Lemma~\ref{lem:approximation} we analyze the running time of the Procedure \repairTree and clarify some implementation details for the distributed setting.
In the search process, every node $ u \in U $ tries to find a node $ v $ to connect to that fulfills certain properties.
We search for such a node $ v $ by examining the neighborhood of $ u $ in~$ G $ up to depth $ \delta $ using breadth-first search, which takes time $ O (\delta) $ for a single node.
Whenever local searches of nodes in $ U $ ``overlap'' and two messages have to be sent over an edge, we arbitrarily allow to send one of these messages and delay the other one to the next round.
As there are at most $ \kappa $ nodes in $ U $, we can simply bound the time needed for all searches by $ O (\kappa \delta) $.

\paragraph{Weighted Edges.}
The tree computed by the algorithm contains weighted edges. 
Such an edge $ e $ corresponds to a path $ \pi $ of the same distance in the network.
We implement weighted edges by a routing table for every node $ v $ that stores the next node on $ \pi $ if a message is sent over $ v $ as part of the weighted edge $ e $.

\paragraph{Broadcasting Deletions.}
The nodes that do not have a parent in $ F' $ before the procedure \repairTree starts do not necessarily know that a new edge deletion has happened.
Such a node only has to become active and do the search if there is a change in its neighborhood within distance $ \delta $, otherwise it can still use the weighted edge in the tree $ T $ that it previously used because the two conditions imposed by the algorithm will still be fulfilled.
After the deletion of an edge $ (x, y) $, the nodes $ x $ and $ y $ can inform all nodes at distance $ \delta $ about this event.
This takes time $ O (\delta) $ per deletion, which is within our projected running time.

\subsection{Analysis of Decremental Distributed Algorithm}

The Procedure \repairTree provides an additive approximation of the shortest paths and a means for detecting that the distance of some node to $ \s $ has increased by at least~$ \delta $ since the beginning of the current phase.
Using this procedure as a subroutine we can provide a running time analysis for the decremental algorithm that is very similar to the one of the incremental algorithm.

\begin{lemma}\label{lem:running_time_decremental_distributed}
For every $ X \geq 1 $, $ \kappa \geq 1 $, and $ \delta \geq 1 $ the total update time of Algorithm~\ref{alg:decremental_algorithm} is
$ O ( \q X / \kappa + n X^2 / \delta^2 + \q \kappa \delta + n) $ and it provides the following approximation guarantee: If $ \dist_{G_0} (x, \s) \leq X $, then
\begin{equation*}
\dist_{G_0} (x, \s) \leq \dist_{G} (x, \s) \leq \dist_{T}^\mathrm{w} (x, \s) \leq \dist_{G_0} (x, \s) + \kappa \delta \, .
\end{equation*}
\end{lemma}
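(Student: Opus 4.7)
My approach mirrors the proof of Lemma~\ref{lem:running_time_incremental_abstract}, while leaning on Lemma~\ref{lem:approximation} both for the approximation guarantee and for the per-deletion cost of \repairTree. The approximation guarantee is immediate from Lemma~\ref{lem:approximation}: whenever \repairTree returns a tree, that lemma already gives the stated chain of inequalities; whenever \repairTree instead reports a distance increase, Algorithm~\ref{alg:decremental_algorithm} immediately invokes \initialize, after which $ T = T_0 $ is a fresh BFS tree of the newly reset $ G_0 = G $, so $ \dist_T^{\mathrm{w}}(x, \s) = \dist_{G_0}(x, \s) = \dist_G(x, \s) $ and the chain holds trivially.

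For the running time I would partition the total work into (i)~the cost of \initialize calls and (ii)~the per-deletion cost incurred inside \delete. Each \initialize builds a BFS tree of depth~$X$, which by Theorem~\ref{thm:Even_Shiloach} takes time $ O(X) $ in the distributed model, so it suffices to count phases. A new phase begins once at the start and at most $ O(q/\kappa) $ times from the counter reaching $ \kappa $; the remaining starts are triggered by \repairTree reporting a distance increase. To bound these I would combine Lemma~\ref{lem:approximation}, which says each such report is witnessed by a node $ x $ with $ \dist_{G_0}(x, \s) \leq X $ whose distance to $ \s $ has grown by at least $ \delta $ within the current phase, with Lemma~\ref{lem:distance_increase} applied to $ H' = G_0 $ and $ H = G $, which then turns this single-node jump into an $ \Omega(\delta^2) $ increase of $ \sum_x \dist_G(x, \s) $ during the phase. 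Since only the at most $ n $ nodes of $ G_0 $-distance at most $ X $ are relevant and each contributes at most $ O(X) $ to this sum, the total ``budget'' across the entire execution is $ O(nX) $, bounding the number of distance-triggered phase starts by $ O(nX/\delta^2) $. Multiplying by the per-phase cost $ O(X) $ contributes $ O(qX/\kappa + nX^2/\delta^2 + X) $, and since $ X \leq n $ the leftover $ O(X) $ is absorbed into the additive $ n $ of the claim.

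For the per-deletion cost, Lemma~\ref{lem:approximation} says each call to \repairTree runs in time $ O(\kappa\delta) $, and, as explained in the ``Broadcasting Deletions'' paragraph following that lemma, propagating each deletion to its $ \delta $-neighborhood takes a further $ O(\delta) $. Summed over $ q $ deletions this gives $ O(q\kappa\delta) $, which is the third term of the claim. Combining (i) and (ii) yields the stated total $ O(qX/\kappa + nX^2/\delta^2 + q\kappa\delta + n) $.

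The main obstacle I anticipate is the application of Lemma~\ref{lem:distance_increase} in the decremental setting, since that lemma requires $ H $ to be connected while $ G $ may well become disconnected during a phase. I would address this by restricting the potential $ \sum_x \dist_G(x, \s) $ to nodes still reachable from $ \s $ in $ G $, or equivalently by capping each summand at $ X+1 $, so that the potential stays bounded by $ O(nX) $ while Lemma~\ref{lem:distance_increase} still applies inside the connected component of $ \s $ in $ G $ that contains the witness node; nodes which disconnect mid-phase only help the accounting, and their treatment can be folded in in the same style as in Section~\ref{sec:removing_connectedness_assumption_incremental}.
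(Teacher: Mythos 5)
Your high-level decomposition is the right one and matches the paper: the approximation guarantee is immediate from Lemma~\ref{lem:approximation}, the phase count splits into $O(q/\kappa)$ counter-triggered starts plus $O(nX/\delta^2)$ distance-triggered starts via Lemma~\ref{lem:distance_increase} (applied with $H = G$ and $H' = G_0$, which is the correct orientation in the decremental case), each phase costs one $O(X)$ BFS, and each call to \textsc{RepairTree} costs $O(\kappa\delta)$.

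The genuine gap is that you have analysed the algorithm as if it were in the abstract model, but Lemma~\ref{lem:running_time_decremental_distributed} is stated in the distributed model, and most of the paper's proof is devoted to precisely the part you skipped: how the root learns that it should start a new phase. In the distributed setting, the deletion counter reaching $\kappa$ and a $\delta$-increase being detected are local events; the paper has to (a) propagate deletion counts to $\s$ by pushing aggregated messages up the maintained tree while its levels may drift by $\kappa\delta$ per deletion, spending $O(X/\kappa + \kappa\delta)$ per recovery stage and triggering a new phase after $\kappa/2$ messages arrive, and (b) have a node that detects a $\delta$-increase flood a special message out to distance $2X$, taking $O(X)$. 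Crucially, the $+n$ term in the bound is \emph{not} from absorbing an $O(X)$ leftover as you suggest; it arises from (b): when the flood fails to reach $\s$, the $\geq X$ nodes it did reach are permanently retired from the algorithm and the $O(X)$ flood cost is charged $O(1)$ to each of them. This retirement mechanism is also the paper's actual resolution of the connectivity concern you raise, since it keeps participating nodes within bounded distance of $\s$ so that the potential stays $O(nX)$ and Lemma~\ref{lem:distance_increase} applies; your ``cap the summand at $X{+}1$'' idea is a plausible starting point but is not a drop-in replacement, because once a node is capped its later $\delta$-increases contribute nothing to the potential, so without an explicit retirement (or charging) rule the $O(nX/\delta^2)$ bound on distance-triggered phase starts does not follow.
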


\begin{proof}
Using the distance increase argument of Lemma~\ref{lem:distance_increase}, we can bound the number of phases by $ O (\q / \kappa + n X^2 / \delta^2) $.
To every phase, we charge a running time of $ O (X) $, which is the time needed for computing a BFS tree up to depth $ X $ at the beginning of the phase.
Additionally we charge a running time of $ \kappa \delta $ to every deletion since the Procedure \repairTree, which is called after every deletion, has a running time of $ O (\kappa \delta) $ by Lemma~\ref{lem:approximation}.

As in the incremental distributed algorithm we have to enrich the decremental algorithm with a mechanism that allows the root node to coordinate the phases.
We explain these implementation details and analyze their effects on the running time in the following.

\paragraph{Reporting Distance Increase.}
When a node $ v $ detects a distance increase by more than $ \delta $, it tries to inform the root about the distance increase by sending a special message.
It sends the message to all nodes in distance at most $ 2 X $ from $ v $ in a breadth-first manner, which takes time $ O (X) $.
If the root is among these nodes, the root initiates a new phase and the cost of $ O (X) $ is charged to the new phase.
Otherwise, the nodes in distance at most $ X $ from $ v $ know that their distance to the root is more than $ X $.
In that case in particular all nodes in the subtree of $ v $ in $ F' $ have received the message and know that their distance to the root is more than $ X $ now.
All nodes that are in distance at most $ X $ from $ v $ do not have to participate in the algorithm anymore.
Thus, we can charge the time of $ O (X) $ for sending the message to these at least $ X $ nodes.
This give a one-time charge of $ O(1) $ to every node and adds $ O(n) $ to the total update time.
A special case is when $ v $ becomes disconnected from the root and its new component has size less than $ X $.
In that case the time for sending the message to all nodes in the component takes time proportional to the size of the component, which again results in a charge of $ O (1) $ to each node.

\paragraph{Counting Deletions.}
We count the number of deletions at the root as follows.
First, observe that we do not have to count those deletions that result in a distance increase by more than~$ \delta $ because after such an event either a new phase starts or the deletion only affects nodes whose distance to the root has increased to more than~$ X $ after the deletion.
The remaining deletions can be counted by sending one message per deletion to the root using the tree maintained by the algorithm such that each deletion message will arrive at the root after $ \kappa / 2 $ recovery stages.
During each recovery stage we move up all the deletion messages that have not yet arrived at the root along $ 2 X / \kappa $ nodes in the tree.
To avoid congestion we aggregate deletion messages meeting at the same node by simply counting the \emph{number} of deletions.
Note that the level of a node in the tree might increase by at most $ \kappa \delta $ with every deletion.
Therefore we need spend time $ O (X/\kappa + \kappa \delta) $ during each recovery stage to ensure that every deletion message that has not yet arrived at the root decreases its level in the tree by at least $ 2 X/\kappa $.
In this way, the first deletion message arrives after $ \kappa/2 $ reocvery stages and after $ \kappa $ recovery stages the first $ \kappa/2 $ messages have arrived at the root.
This process takes total time $ O(X + \kappa^2 \delta) $ for $ \kappa $ recovery stages after deletions.
We can charge time $ O(X) $ to the current phase and time $ O (\kappa \delta) $ to each deletion ocurring in the phase.
Thus, to obtain an additive approximation of exactly $ \kappa \delta $, we slightly modify the algorithm to start a new phase as soon as the root has been notified of $ \kappa / 2 $ deletions.
\end{proof}

We use a similar approach as in the incremental setting to get the $ (1+\epsilon) $-approximation.
We run $ i $ ``parallel'' instances of the algorithm where each instance covers the distance range from $ 2^i $ to $ 2^{i+1} $.
By an appropriate choice of the parameters $ \kappa $ and $ \delta $ for each instance we can guarantee a $ (1+\epsilon) $-approximation.

\begin{lemma}\label{lem:multiplicative_approximation_decremental_distributed}
Let $ 0 < \epsilon \leq 1 $ and assume that $ \epsilon^5 q X / 32 \geq n $ and $ n X^{3/2} \geq \q $.
Then, by setting $ \kappa = \q^{1/5} X^{1/5} / n^{1/5} $ and $ \delta = n^{2/5} X^{3/5} / \q^{2/5} $, Algorithm~\ref{alg:decremental_algorithm} runs in time $ O ( \q^{4/5} n^{1/5} X^{4/5} ) $.
Furthermore, it provides the following approximation guarantee:
For every node $ x $ such that $ \dist_{G_0} (x, \s) \leq X $ we have 
\begin{equation*}
\dist_{G_0} (x, \s) \leq \dist_{G} (x, \s) \leq \dist_{T}^\mathrm{w} (x, \s)
\end{equation*}
and for every node $ x $ such that $ \dist_G (x, \s) \geq X/2 $ we additionally have
\begin{equation*}
\dist_{T}^\mathrm{w} (x, \s) \leq (1 + \epsilon) \dist_{G_0} (x, \s) \leq (1 + \epsilon) \dist_G (x, \s) \, .
\end{equation*}
\end{lemma}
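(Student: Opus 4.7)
The plan is to follow the same two-step pattern as in Lemma~\ref{lem:multiplicative_approximation_abstract} for the incremental case: substitute the specified choices of $\kappa$ and $\delta$ into the generic bound of Lemma~\ref{lem:running_time_decremental_distributed} to obtain the running time and additive error, and then promote the additive error $\kappa\delta$ to a multiplicative $(1+\epsilon)$-factor using the hypothesis $\epsilon^5 q X / 32 \geq n$ together with the distance lower bound $\dist_G(x,\s)\geq X/2$.

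As preparation, I would verify the two sanity conditions $\kappa \geq 1$ and $\delta \geq 1$ needed to invoke Lemma~\ref{lem:running_time_decremental_distributed}. For $\delta = n^{2/5} X^{3/5} / q^{2/5}$ this follows directly from $n X^{3/2} \geq q$, and for $\kappa = q^{1/5} X^{1/5} / n^{1/5}$ it follows from $\epsilon^5 q X / 32 \geq n$ combined with $\epsilon \leq 1$, which yields $qX \geq n$. I would then plug $\kappa$ and $\delta$ into the running time expression
\begin{equation*}
O\!\left( \frac{q X}{\kappa} + \frac{n X^2}{\delta^2} + q \kappa \delta + n \right)
\end{equation*}
from Lemma~\ref{lem:running_time_decremental_distributed}. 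The parameters are engineered so that all three nontrivial terms equal $q^{4/5} n^{1/5} X^{4/5}$, and the additive $n$ is dominated since $qX \geq n$ implies $q^{4/5} n^{1/5} X^{4/5} \geq n$, giving the claimed total update time.

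For the approximation guarantee, the additive lower bounds $\dist_{G_0}(x,\s) \leq \dist_G(x,\s) \leq \dist_T^{\mathrm w}(x,\s)$ are immediate from Lemma~\ref{lem:running_time_decremental_distributed}; it remains to derive the multiplicative bound. Here the hypothesis $\epsilon^5 q X / 32 \geq n$ enters through
\begin{equation*}
\kappa \delta \;=\; \frac{n^{1/5} X^{4/5}}{q^{1/5}} \;\leq\; \frac{(\epsilon^5 q X / 32)^{1/5} \, X^{4/5}}{q^{1/5}} \;=\; \frac{\epsilon X}{2},
\end{equation*}
using $32^{1/5} = 2$. Combined with $\dist_G(x, \s) \geq X/2$, this yields $\kappa\delta \leq \epsilon\,\dist_G(x,\s)$, and then the multiplicative bound follows from the chain $\dist_T^{\mathrm w}(x, \s) \leq \dist_{G_0}(x, \s) + \kappa\delta \leq \dist_G(x, \s) + \kappa\delta$ exactly as in Lemma~\ref{lem:multiplicative_approximation_abstract}. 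The only novel ingredient compared to the incremental case is the extra $q\kappa\delta$ term in the running time of Lemma~\ref{lem:running_time_decremental_distributed} (which accounts for the per-deletion cost of the Procedure for repairing the tree), and the main technical point is choosing $\kappa$ and $\delta$ so that this third term balances with the other two; the exponents $1/5$ and $2/5$ in the statement are precisely what this three-way balance forces, and once they are fixed the rest of the argument is a direct analog of the incremental proof.
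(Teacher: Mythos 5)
Your proof is correct and takes essentially the same route as the paper's: verify $\kappa \geq 1$ and $\delta \geq 1$ from the two hypotheses, substitute into the bound of Lemma~\ref{lem:running_time_decremental_distributed} and observe that the three nontrivial terms all equal $q^{4/5} n^{1/5} X^{4/5}$ (with $n$ dominated since $qX \geq n$), then compute $\kappa\delta = n^{1/5}X^{4/5}/q^{1/5} \leq \epsilon X/2$ from $\epsilon^5 qX/32 \geq n$ and promote the additive error to a multiplicative $(1+\epsilon)$ factor. One small wrinkle worth flagging: your final chain $\dist_T^{\mathrm{w}}(x,\s) \leq \dist_{G_0}(x,\s) + \kappa\delta \leq \dist_G(x,\s) + \kappa\delta \leq (1+\epsilon)\dist_G(x,\s)$ gives the bound in terms of $\dist_G$, not the stated $\dist_{G_0}$; to reach the stated $(1+\epsilon)\dist_{G_0}(x,\s)$ one needs $\dist_{G_0}(x,\s) \geq X/2$ rather than $\dist_G(x,\s) \geq X/2$, and indeed the paper's own proof silently switches to that hypothesis, so this mismatch between statement and proof is present in the source as well and is not an error you introduced.
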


\begin{proof}
Since $ \epsilon^5 q X / 32 \geq n $ implies $ q X \geq n $ we have $ \kappa \geq 1 $ and since $ n X^{3/2} \geq q $ we have $ \delta \geq 1 $.
It is easy to check that by our choices of $ \kappa $ and $ \delta $ the three terms in the running time of Lemma~\ref{lem:running_time_decremental_distributed} are balanced and we get:
\begin{equation*}
\frac{q}{\kappa} \cdot X = \frac{n X}{\delta^2} \cdot X = \q \kappa \delta = \q^{4/5} n^{1/5} X^{4/5} \, .
\end{equation*}
Furthermore, since $ \q X \geq n $ we have $ \q^{4/5} n^{1/5} X^{4/5} \geq n^{1/5} (\q X)^{4/5} \geq n^{1/5} n^{4/5} = n $ and therefore the running time of the algorithm is $ O (\q^{4/5} n^{1/5} X^{4/5}) $.

We now argue that the approximation guarantee holds.
By Lemma~\ref{lem:running_time_decremental_distributed}, we already know that
\begin{equation*}
\dist_{G_0} (x, \s) \leq \dist_{G} (x, \s) \leq \dist_{T}^\mathrm{w} (x, \s) \leq \dist_{G_0} (x, \s) + \kappa \delta
\end{equation*}
for every node $ x $ such that $ \dist_{G_0} (x, \s) \leq X $.
We now show that our choices of $ \kappa $ and~$ \delta $ guarantee that $ \kappa \delta \leq \epsilon \dist_{G_0} (x, \s) $, for every node $ x $ such that $ \dist_{G_0} (x, \s) \geq X/2 $, which immediately gives the desired inequality.
By our assumptions we have $ n \leq \epsilon^5 \q X /32 $ and therefore we get
\begin{equation*}
\kappa \delta = \frac{\q^{1/5} X^{1/5}}{n^{1/5}} \cdot \frac{n^{2/5} X^{3/5}}{\q^{2/5}} = \frac{n^{1/5} X^{4/5}}{\q^{1/5}} \leq \frac{\epsilon \q^{1/5} X^{1/5} X^{4/5}}{2 \q^{1/5}} = \frac{\epsilon X}{2} \leq \epsilon \dist_{G_0} (x, \s) \, .
\end{equation*}
The value $ (1 + \epsilon) \dist_{G_0} (x, \s) $ is thus a $ (1 + \epsilon) $-approximation of $ \dist_G (x, \s) $.
\end{proof}

\begin{theorem}\label{thm:decremental_multiplicative_approximation}
In the distributed model, there is a decremental algorithm for maintaining a $ (1+\epsilon) $-approximate BFS tree over $ \q $ deletions with a total update time of $ O ( q^{4/5} n^{1/5} D^{4/5} / \epsilon ) $, where $ D $ is the dynamic diameter.
\end{theorem}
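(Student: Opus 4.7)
The plan is to follow the multi-layer template from the proof of Theorem~\ref{thm:incremental_SSSP_distributed_model}, with Lemma~\ref{lem:multiplicative_approximation_decremental_distributed} now playing the role of the per-range subroutine. For each $ 0 \leq i \leq \lceil \log D \rceil $, I would set $ X_i = 2^i $ and run, in parallel, one of three algorithms in the $i$-th layer, chosen based on which of the two hypotheses of Lemma~\ref{lem:multiplicative_approximation_decremental_distributed} hold at that $ X_i $. When $ \epsilon^5 q X_i / 32 < n $, recompute a BFS tree of depth $ X_i $ from scratch after each deletion, at cost $ O(q X_i) $ per layer. When $ n X_i^{3/2} < q $ but the other hypothesis holds, maintain an Even--Shiloach tree of depth $ X_i $ using Theorem~\ref{thm:Even_Shiloach}, at cost $ O(n X_i + q) $ per layer. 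In the remaining (main) range, run Algorithm~\ref{alg:decremental_algorithm} with the parameters $ \kappa_i,\delta_i $ prescribed by Lemma~\ref{lem:multiplicative_approximation_decremental_distributed}, at cost $ O(q^{4/5} n^{1/5} X_i^{4/5}) $ per layer.

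To preserve the approximation guarantee across layer boundaries, whenever layer $ i $ starts a new phase I simultaneously re-initialize all layers $ j \leq i $. This enforces the invariant that each node $ v $ connected to $ \s $ satisfies $ \dist_{G_0}(v,\s) \in [X_i/2, X_i] $ for some index $ i $ at the start of layer $ i $'s current phase, so the approximation promised by Lemma~\ref{lem:multiplicative_approximation_decremental_distributed} (in the main-range case) or exactness (in the two fallback cases) carries over to $ v $. The extra cost of re-initializing the lower layers is a geometric sum bounded by $ O(X_i) $, which is absorbed by the $ O(X_i) $ BFS cost already charged to the new phase.

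For the total running time, I would show that each of the three per-layer costs is $ O(q^{4/5} n^{1/5} X_i^{4/5}/\epsilon) $, and then sum geometrically to obtain $ O(q^{4/5} n^{1/5} D^{4/5}/\epsilon) $. The main-range bound is immediate from Lemma~\ref{lem:multiplicative_approximation_decremental_distributed}. For the first fallback, the failing hypothesis $ \epsilon^5 q X_i < 32 n $ rearranges directly into $ q X_i = O(q^{4/5} n^{1/5} X_i^{4/5}/\epsilon) $. For the Even--Shiloach fallback, the failing hypothesis $ n X_i^{3/2} < q $ implies $ n^{4/5} X_i^{1/5} < q^{4/5}/X_i \leq q^{4/5}/\epsilon $ (using $ X_i \geq 1 \geq \epsilon $), which upon multiplication by $ n^{1/5} X_i^{4/5} $ yields $ n X_i \leq q^{4/5} n^{1/5} X_i^{4/5}/\epsilon $; the additive $ O(q) $ term from Even--Shiloach is handled analogously to its treatment in the incremental proof. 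Finally, the values of $ q $ and $ D $ do not need to be known in advance: I would use a standard doubling guess for $ q $ and enlarge the number of active layers on demand, mirroring the end of the proof of Theorem~\ref{thm:incremental_SSSP_distributed_model}. The most delicate step will be verifying that the Even--Shiloach fallback in the second regime truly respects the $ X_i^{4/5} $ dependence needed for the geometric sum to close.
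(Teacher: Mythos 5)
Your multi-layer structure, the three-regime split, and the running-time bookkeeping all match the paper's proof. The genuine gap is in the coverage argument across layers. You copy the incremental strategy of re-initializing every layer $j \leq i$ whenever layer $i$ starts a new phase, and claim this enforces the invariant $\dist_{G_0}(v,\s) \in [X_i/2, X_i]$ for some $i$. That invariant is both wrong in direction and not what Lemma~\ref{lem:multiplicative_approximation_decremental_distributed} requires. In the \emph{incremental} case the concern is nodes drifting downward out of a layer's range, and re-initializing lower layers addresses that; in the \emph{decremental} case distances only increase, so nodes drift \emph{upward}, and re-initializing lower layers does nothing to catch them. For example, if layer $i$'s current phase started when $v$ was still much closer than $X_i/2$, you cannot make $\dist_{G_0}(v,\s) \geq X_i/2$ hold for some $j \leq i$ after the fact.

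What the paper actually uses is that no re-initialization across layers is needed in the decremental direction at all, because Lemma~\ref{lem:multiplicative_approximation_decremental_distributed}'s two hypotheses are asymmetric: $\dist_{G_0}(x,\s) \leq X$ ensures tree membership, and $\dist_G(x,\s) \geq X/2$ triggers the $(1+\epsilon)$ bound. Given a node at current distance $d$, pick $i$ with $X_i/2 \leq d \leq X_i$; then $\dist_G(v,\s) = d \geq X_i/2$ holds, and in a decremental graph $\dist_{G_0}(v,\s) \leq \dist_G(v,\s) = d \leq X_i$ holds automatically at the start of \emph{any} past phase of instance $i$. The only subtlety the paper addresses is that a node may leave instance $i$'s tree mid-phase when its distance exceeds $X_i$; since $X_{i+1} = 2X_i$, it is then immediately ``covered'' by the next layer up. So you should drop the re-initialization of lower layers and replace your invariant with this direct observation; as written, your justification does not go through even though the extra re-initialization work would be absorbed by the geometric sum.

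One further point worth tidying: your second fallback bound uses the failing hypothesis $n X_i^{3/2} < q$ to show $n X_i \leq q^{4/5} n^{1/5} X_i^{4/5}/\epsilon$, which is fine, but you also invoke the additive $+q$ from Theorem~\ref{thm:Even_Shiloach} with only a gesture to the incremental proof. In the incremental case the analogous domination follows from $q \leq n X_i^2$, but in the Even--Shiloach regime here you have the opposite inequality $q \geq n X_i^{3/2}$, so $q \leq q^{4/5}n^{1/5}X_i^{4/5}/\epsilon$ is not an immediate consequence of the regime conditions; you would need a separate bookkeeping argument (e.g., the $+q$ term is paid once globally, not once per layer) rather than a layer-by-layer absorption.
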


\begin{proof}
Our algorithm consists of $ O (\log{D}) $ layers.
For each $ 0 \leq i \leq \lceil \log{D} \rceil $ we set $ X_i = 2^i $ and do the following:
If $ q \leq 32 n / (\epsilon^5 X_i) $, we recompute a BFS tree up to depth $ X_i $ from scratch after every deletion.
If $ q > 32 n / (\epsilon^5 X_i) $ and $ X_i \leq (q / n)^{2/3} $, we maintain an Even-Shiloach tree up to depth $ X_i $.
If $ q > 32 n / (\epsilon^5 X_i) $ and $ X_i > (q / n)^{2/3} $ we run an instance of Algorithm~\ref{alg:decremental_algorithm} with parameters $ X_i = 2^i $ and $ \kappa_i $ and $ \delta_i $ as in Lemma~\ref{lem:multiplicative_approximation_decremental_distributed}.
Note that $ D $ might increase over the course of the algorithm due to edge deletions (or might not be known in advance).
Therefore, whenever we initialize the algorithm in the layer with the current largest index, we do a full BFS tree computation.
If the depth of the BFS tree exceeds $ X_i $, we increase the number of layers accordingly and charge the running time of the BFS tree computation to the layer with new largest index.

We first argue that this algorithm provides a $ (1+\epsilon) $-approximation.
The algorithm maintains the exact distances for all nodes that are in distance at most $ 32 n / (\epsilon^5 q) $ or $ (q / n)^{2/3} $ from the root as in these cases the distances are obtained by recomputing the BFS tree from scratch or by the Even-Shiloach tree.
For all other nodes we have to argue that our multi-layer version of Algorithm~\ref{alg:decremental_algorithm} provides a $ (1+\epsilon) $-approximation.
Note that the approximation guarantee of Lemma~\ref{lem:multiplicative_approximation_abstract} only applies if $ \epsilon^5 q X_i / 32 \geq n $ and $ n X_i^{3/2} \geq q $.
These two inequalities hold because $ q $ and $ X_i $ are large enough:
\begin{gather*}
\epsilon^5 q X_i / 32 \geq \epsilon^5 (32 n / (\epsilon^5 X_i)) X_i / 32 = n \\
n X_i^{3/2} \geq n ((q / n)^{2/3})^{3/2} = q \, .
\end{gather*}
In each instance $ i $ of Algorithm~\ref{alg:decremental_algorithm}, the approximation guarantee of Lemma~\ref{lem:multiplicative_approximation_abstract} holds for all nodes whose distance to the root at the beginning of the current phase of instance~$ i $ was at most $ X_i $ and whose current distance to the root is at least $ X_i / 2 $.
Whenever an instance $ i $ starts a new phase, there might be some nodes who before were contained in the tree of instance $ i $, but are not contained in the new tree anymore because their distance to the root has increased to more than~$ X_i $.
Since $ X_i = X_{i+1} / 2 $ we know that those node will immediately be ``covered'' by an instance with larger index.
Thus, after each recovery stage every node that is connected to the root will be contained in the tree of some instance $ i $ such that the preconditions of Lemma~\ref{lem:multiplicative_approximation_abstract} apply and thus the distance to the root in that tree provides a $ (1+\epsilon) $-approximation.
In particular each node simply has to pick the tree of the smallest index containing it.

We will now bound the running time.
We will argue that the running time in every layer $ i $ is $ O ( \q^{4/5} n^{1/5} X_i^{4/5} / \epsilon ) $.
If the number of insertions is at most $ q \leq 32 n / (\epsilon^5 X_i) $, then computing a BFS tree from scratch up to depth $ X_i $ after very insertion takes time $ O (q X_i) $ in total, which we can bound as follows:
\begin{equation*}
\q X_i = \q^{4/5} q^{1/5} X_i \leq \frac{\q^{4/5} 32^{1/5} n^{1/5} X_i^{4/5}}{\epsilon} = O \left( \frac{\q^{4/5} n^{1/5} X_i^{4/5}}{\epsilon} \right) \, .
\end{equation*}
By Theorem~\ref{thm:Even_Shiloach} maintaining an Even-Shiloach tree up to depth $ X_i \leq (q / n)^{2/3} $ takes time $ O (n X_i) = O (q^{2/3} n^{1/3}) $.
Since we only do this in the case $ q > 32 n / (\epsilon^5 X_i) $, we can use the inequality
\begin{equation*}
n < \frac{\epsilon^5 q X_i}{32} \leq \epsilon^5 q X_i \leq \frac{q X_i^6}{\epsilon^{15/2}}
\end{equation*}
to obtain
\begin{equation*}
n X_i \leq q^{2/3} n^{1/3} = q^{2/3} n^{1/5} n^{2/15} \leq q^{2/3} n^{1/5} \frac{q^{2/15} X_i^{4/5}}{\epsilon} = \frac{q^{4/5} n^{1/5} X_i^{4/5}}{\epsilon} \, .
\end{equation*}
Finally we use Lemma~\ref{lem:multiplicative_approximation_decremental_distributed} to bound the running time of Algorithm~\ref{alg:decremental_algorithm} in layer~$ i $ by $ O ( \q^{4/5} n^{1/5} X_i^{4/5} / \epsilon ) $ as well.
Thus, the running time over all layers is
\begin{equation*}
O \left( \sum_{0 \leq i \leq \lceil \log{D} \rceil} \frac{\q^{4/5} n^{1/5} X_i^{4/5}}{\epsilon} \right)
 = O \left( \sum_{0 \leq i \leq \lceil \log{D} \rceil} \frac{\q^{4/5} n^{1/5} (2^i)^{4/5}}{\epsilon} \right)
 = O \left( \frac{\q^{4/5} n^{1/5} D^{4/5}}{\epsilon} \right) \, .
\end{equation*}
By using a doubling approach for guessing the value of $ q $ we can run the algorithm with the same asymptotic running time without knowing the number of deletions beforehand.
\end{proof}

\section{Conclusion and Open Problems}

In this paper, we showed that an approximate breadth-first search spanning tree can be maintained in amortized time per update that is sublinear in the diameter $D$ in partially dynamic distributed networks when amortized over a sufficient number of updates.
Many problems remain open. For example, can we get a similar result for the case of {\em fully-dynamic} networks? How about {\em weighted} networks (even partially dynamic ones)? Can we also get a sublinear time bound for the {\em all-pairs shortest paths} problem? Moreover, in addition to the sublinear-time complexity achieved in this paper, it is also interesting to obtain algorithms with small bounds on message complexity and memory.

We believe that the most interesting open problem is whether the sequential algorithm in this paper can be improved to obtain a deterministic incremental algorithm with near-linear total update time. As noted earlier, techniques from this paper have led to a {\em randomized} decremental algorithm with near-linear total update time \cite{HenzingerKN-FOCS14} (the same algorithm also works in the incremental setting). Whether this algorithm can be derandomized was left as a major open problem in~\cite{HenzingerKN-FOCS14}. As the incremental case is usually easier than the decremental case, it is worth obtaining this result in the incremental setting first.

\section*{Acknowledgement}
We thank the reviewers of ICALP 2013 for pointing to related papers and to an error in an example given in a previous version of this paper.
We also thank one of the reviewers of Transactions on Algorithms for very detailed comments.

\printbibliography[heading=bibintoc] 

\end{document}